\renewcommand\footnotetextcopyrightpermission[1]{} %
  \providecommand\BibTeX{{%
    \normalfont B\kern-0.5em{\scshape i\kern-0.25em b}\kern-0.8em\TeX}}}
\newcommand*\circled[1]{\tikz[baseline=(char.base)]{
             \node[shape=circle,draw,inner sep=0pt,fill=black, text=white] (char) {#1};}}
\newtheorem{theorem}{\bf Theorem}[section]
\newtheorem{lemma}[theorem]{\bf Lemma}
\newenvironment{definition}[1][Definition]{\begin{trivlist}
\item[\hskip \labelsep {\bfseries #1}]}{\end{trivlist}}
\newcommand{\para}[1]{\vspace{2pt}\noindent\textbf{#1}}
\titlespacing{\section}{0pt}{6.5pt}{6.5pt}
\titlespacing{\subsection}{0pt}{5pt}{5pt}
\newcounter{Lcount}
\newcommand{\squishlisttwo}{
\begin{list}{\arabic{Lcount}. }
{ \usecounter{Lcount}
\setlength{\itemsep}{0pt}
\setlength{\parsep}{0pt}
\setlength{\topsep}{0pt}
\setlength{\partopsep}{0pt}
\setlength{\leftmargin}{2em}
\setlength{\labelwidth}{1.5em}
\setlength{\labelsep}{0.5em} } }
\newcommand{\squishtwoend}{
\end{list} }
\newcommand{\squishlist}{
   \begin{list}{$\bullet$}
    { \setlength{\itemsep}{0pt}      \setlength{\parsep}{0pt}
      \setlength{\topsep}{3pt}       \setlength{\partopsep}{0pt}
      \setlength{\listparindent}{-2pt}
      \setlength{\itemindent}{-5pt}
      \setlength{\leftmargin}{0.5em} \setlength{\labelwidth}{0em}
      \setlength{\labelsep}{0.5em} } }
\newcommand{\squishend}{
    \end{list}  }
\begin{document}

\title{Agatha: Smart Contract for DNN Computation}

\author{Zihan Zheng}
\affiliation{%
  \institution{University of Science and Technology of China}
  \city{Hefei}
  \country{China}
}
\email{zzh1996@mail.ustc.edu.cn}

\author{Peichen Xie}
\affiliation{%
  \institution{Peking University}
  \city{Beijing}
  \country{China}
}
\email{xpc@pku.edu.cn}

\author{Xian Zhang}
\affiliation{%
  \institution{Microsoft Research}
  \city{Beijing}
  \country{China}
}
\email{zhxian@microsoft.com}

\author{Shuo Chen}
\affiliation{%
  \institution{Microsoft Research}
  \city{Beijing}
  \country{China}
}
\email{shuochen@microsoft.com}

\author{Yang Chen}
\affiliation{%
  \institution{Microsoft Research}
  \city{Beijing}
  \country{China}
}
\email{yachen@microsoft.com}

\author{Xiaobing Guo}
\affiliation{%
  \institution{Microsoft Research}
  \city{Beijing}
  \country{China}
}
\email{xiaobing.guo@microsoft.com}

\author{Guangzhong Sun}
\affiliation{%
  \institution{University of Science and Technology of China}
  \city{Hefei}
  \country{China}
}
\email{gzsun@ustc.edu.cn}

\author{Guangyu Sun}
\affiliation{%
  \institution{Peking University}
  \city{Beijing}
  \country{China}
}
\email{gsun@pku.edu.cn}

\author{Lidong Zhou}
\affiliation{%
  \institution{Microsoft Research}
  \city{Beijing}
  \country{China}
}
\email{lidongz@microsoft.com}

\renewcommand{\shortauthors}{Zihan Zheng, Peichen Xie, Xian Zhang, Shuo Chen, et al.}

\begin{abstract}

Smart contract is one of the core features of Ethereum and has inspired many blockchain descendants. Since its advent, the verification paradigm of smart contract has been improving toward high scalability. It shifts from the expensive on-chain verification to the orchestration of off-chain VM (virtual machine) execution and on-chain arbitration with the pinpoint protocol. The representative projects are TrueBit, Arbitrum, YODA, ACE, and Optimism. Consequently, verification of more and more complicated computations for smart contract is achieved, such as the aggregated execution of DeFi transactions. Inspired by visionaries in academia and industry, we consider the DNN (deep neural network) computation to be promising but on the next level of complexity for the verification paradigm of smart contract. Unfortunately, even for the state-of-the-art verification paradigm, off-chain VM execution of DNN computation has an orders-of-magnitude slowdown compared to the native off-chain execution.

To enable the native off-chain execution of verifiable DNN computation, we present {\em Agatha} system, which solves the significant challenges of {\em misalignment} and {\em inconsistency}: (1) Native DNN computation has a graph-based computation paradigm misaligned with previous VM-based execution and arbitration; (2) Native DNN computation may be inconsistent cross platforms which invalidates the verification paradigm. In response, we propose the graph-based pinpoint protocol (GPP) which enables the pinpoint protocol on computational graphs, and bridges the native off-chain execution and the contract arbitration. We also develop a technique named Cross-evaluator Consistent Execution (XCE), which guarantees cross-platform consistency and forms the correctness foundation of GPP. We showcase Agatha for the DNN computation of popular models (MobileNet, ResNet50 and VGG16) on Ethereum. Agatha achieves a negligible on-chain overhead, and an off-chain execution overhead of 3.0\%, which represents an off-chain latency reduction of at least 602$\times$ compared to the state-of-the-art verification paradigm.

\end{abstract}

\settopmatter{printfolios=true}
\maketitle

\section{Introduction}\label{sec:intro}

The Turing completeness of {\em smart contract} enables arbitrary computations to be executed and verified on blockchain nodes. It underpins the expressiveness of the decentralized computing paradigm. As a result, more and more complicated applications are emerging on the Ethereum platform \cite{ethereum}, such as ERC-20 tokens \cite{erc20}, Cryptokitties \cite{cryptokitties}, Zero-knowledge Proof verification \cite{zokrates} and Decentralized Finance (DeFi) \cite{uniswap}.

However, current smart contracts are limited by low efficiency for complex applications, since a contract's computation is re-executed and verified on all Ethereum nodes (e.g., miners, full nodes). For example, a simple task of naïve matrix multiplication of 1000$\times$1000 integers would cost over 3 billion gas (the unit of cost in Ethereum), which far exceeds the current Ethereum's block gaslimit (i.e., 15 million). Even if the task could span over 200 blocks to meet the gaslimit, it would result in an extremely low throughput of $4.2\times10^{-4}$ tasks per second, given Ethereum's block interval of $\sim$12 seconds. In comparison, even a low-end dual-core laptop can do this type of simple tasks with a latency of 0.39 seconds and a throughput of 2.5 tasks per second.

To improve contracts' efficiency for complex applications, one of the most promising solutions is to offload a task off-chain to a quorum of nodes and leave only a contract for arbitrating ``fraud proofs'' on-chain. A fraud proof, which is generated via an interactive {\em pinpoint protocol}, proves that the claimed result of a computation is fraudulent/incorrect. The arbitration process is designed to validate a fraud proof by executing only a few minor computation steps, making the on-chain cost extremely low. Existing technologies, such as Plasma \cite{plasma}, TrueBit \cite{truebit}, Arbitrum \cite{arbitrum}, YODA \cite{yoda}, ACE \cite{ace}, Optimism \cite{optimism}, share this key idea. This opens up exciting opportunities for smart contracts to fulfill new scenarios in the next complexity level, such as the escrow contract, iterated hashing, aggregated execution of DeFi transactions, etc.

If the complexity moves up one more level, AI computations naturally become a fascinating paradigm that people want smart contracts to support. Indeed,
researchers in both the academia (e.g., YODA \cite{yoda}, TrueBit \cite{truebit}, ACE \cite{ace}) and the industry (e.g., Microsoft \cite{microsoft} and startups \cite{cortex}, \cite{singularitynet}, \cite{algorithmia}, \cite{oraclize}) have put forth the vision to fulfill AI computations with decentralized consensus. However, this is an uncharted territory, because contemporary AI computations, specifically DNN (deep neural network) computations, impose serious challenges for all existing technologies:
\squishlist
\item {\em Overhead of on-chain execution.} Some proposed technologies \cite{singularitynet,oraclize,algorithmia,microsoft,smarthome,cdda} either target small AI computations or only target the consensus among a small number of nodes. They do AI computations by the on-chain execution, thus cannot scale. If these approaches are used on Ethereum to run a DNN computation, the cost would be prohibitively high. For example, a single VGG16 \cite{vgg} inference is estimated to consume $\sim 90$ billion gas, equivalent to running $4.3\times10^6$ ETH-transfer transactions\footnote{The ETH-transfer transaction (ETH, or ether, is the unit of Ethereum's currency) is used as a baseline for on-chain cost in this work, which always consumes 21,000 gas.} on the Ethereum platform. It is obviously impractical.
\item {\em Overhead of off-chain execution.} Other technologies \cite{truebit,arbitrum,yoda,ace,plasma} orchestrate off-chain execution and contract arbitration, and thus significantly reduce the on-chain cost.
However, to guarantee an alignment and the consistency between the off-chain execution and the contract arbitration, they all rely on restricted custom virtual machines (VMs) for the off-chain computations. For example, floating-point arithmetic and the multi-threaded execution, which commonly exist in DNN computations, are prohibited in these VMs. This may result in orders of magnitude slowdown compared to the native execution. It also significantly increases the burden for the verifiers and the pinpoint protocol.
\squishend

\para{Problem statement.} The problem we consider in this paper is: \textit{how to enable smart contract verification of DNN computations with low cost, both on-chain and off-chain}. Regarding the scale of the computations, we consider the popular DNN models, such as ResNet50 \cite{resnet}, VGG16 \cite{vgg} and MobileNet \cite{mobilenet}.\footnote{We focus on evaluation of DNN models (i.e. inference), and leave DNN training as future work.}
Once the smart contract obtains this scale of DNN competence, one can imagine many types of real ``smart'' applications, such as:
\squishlist
\item {\em Intelligent Automated Market Maker (AMM).} AMM such as Uniswap \cite{uniswap} has revolutionized the economic ecosystem of Ethereum. DNN can improve the liquidity and profit of the market maker \cite{amm1,amm2}, which is also envisioned in the DeFi community \cite{iamm}.
\item {\em Decentralized AI marketplace.} Decentralized exchanging of digital goods has been proposed recently for stronger fairness, compared to those centralized solutions \cite{fairswap,optiswap,zkcp}. There is a clear need to exchange AI models in the marketplace. Verification of DNN computations is essential to enable this scenario \cite{algorithmia}.
\item {\em Blockchain-based Uber (BUber).} BUber has been extensively discussed to mitigate the fairness issues due to the opacity of the centralized intermediary \cite{lazooz,book}. Scheduling algorithm with a DNN capability can further enhance BUber's efficiency \cite{uber}.
\item {\em Decentralized paper ballot counting.} Handwritten-signature recognition is used for paper ballot counting in national elections \cite{ballot}. Allowing a decentralized smart contract to run the process may help increase the transparency and make the public more confident about the results \cite{vote,vote2}.
\squishend

\para{Agatha system.}
In response to this community vision, we develop a system named Agatha, which demonstrates the first practical contract verification for DNN computation on the public Ethereum. Agatha follows the existing off-chain execution approaches with contract arbitration \cite{yoda,plasma,truebit,ace}, so the on-chain overhead is greatly reduced. The main difference between Agatha and the previous approaches is about the off-chain computation, where Agatha enables the {\em native execution of DNN computation}, whereas others only support restricted VM-based execution. Therefore, Agatha significantly reduces the overhead of the off-chain execution, making the complex computation practical.

\para{Technologies.} Enabling the off-chain native execution of DNN computation for smart contract is our main contribution. It faces two significant challenges:
\squishlist
\item{\em Misalignment of native execution and contract arbitration.} Although previous work, such as TrueBit \cite{truebit}, claims to achieve native off-chain execution for general-purpose computation, DNN computation is much different from their showcased applications. DNN computation is implicitly expressed as a computation graph of operations with rich toolchain support \cite{tensorflow,onnx,onnxruntime,pytorch}. In contrast, existing contract arbitration requires the computation to be expressed by serialized VM instructions. Therefore, considering hardware features used by conventional DNN computations, such as multi-threading, SIMD or even GPU instructions, it is an enormously complicated (and unnecessary) detour to transcode between DNN computations and serialized VM instructions and ensure their consistency.
\item{\em Cross-platform inconsistency.} Native execution of DNN computation, especially on different hardware platforms, may lead to different execution results (floating-point vectors). This can be ascribed to various factors, such as imprecise approximations, different accumulation orders of floating-point numbers, etc. Different execution results or temporary variables during the execution can discredit the fraud proof of the quorum since every honest node can be a ``fraud'' due to execution on different platforms.

\squishend

To solve the challenges, the key technologies of Agatha system are twofold:
\squishlist
\item {\em Graph-based Pinpoint Protocol (GPP)}, which is a protocol that can generate the ``fraud proof'' interactively and efficiently. Instead of representing DNN computation as executing VM instructions, GPP represents DNN computation as evaluating a graph of operations (e.g., Conv, Gemm) and further a graph of basic operations (e.g., fadd, fmul), where the evaluation results of basic operations can be efficiently arbitrated by our smart contract equipped with the floating-point arithmetic emulation. Moreover, the graph-based representation is highly compatible with conventional toolchains of DNN computation. By guaranteeing that evaluating the graph is consistent with the native execution, Agatha's pinpoint protocol is purely based on the graph evaluation, which circumvents the misalignment challenge. In addition, we introduce the {\em Two-phase Pinpoint} design, which optimizes the protocol to equip an orchestrated execution of both coarse-grained evaluator (native execution in the granularity of operation) and fine-grained evaluator (simulation in the granularity of basic operation).
\item {\em Cross-evaluator consistent execution (XCE)},
which guarantees the three consistencies between: (1) native execution and operation evaluation,  (2) operation evaluation and the evaluation of basic operation, and (3) basic-operation evaluation and the smart contract arbitration. To achieve every consistency, we have two steps in general: the first is to make the execution either compliant with the IEEE-754 standard of floating-point arithmetic or in integers; the second is to ensure a fixed order of floating-point sum/product. Following the workflow, we conduct a comprehensive investigation and solid tests to ensure consistency, which spans over hardware heterogeneity, compiler options, arithmetic libraries and so on. We formally prove that with XCE, the cross-platform consistency and the correctness of Agatha system are achieved.

\squishend

In summary, GPP establishes the infrastructure that bridges the gap between the native DNN computation and the contract arbitration. And XCE forms the correctness foundation of GPP by ensuring the cross-evaluator consistency, which naturally leads to the cross-platform consistency.

\para{Results.}
We showcase Agatha system for DNN inference using MobileNet \cite{mobilenet}, ResNet50 \cite{resnet} and VGG16 \cite{vgg}, which are models widely used in the industry.
The evaluations confirm the correctness of XCE, using unit tests and end-to-end tests, which include $\sim$7500 floating-point corner cases.
We also measure the performance and the gas consumption of Agatha. Our off-chain native execution is observed to have an insignificant latency overhead (3.0\% on average) compared to the original DNN computation. By contrast, previous VM-based off-chain execution has an average slowdown of 620$\times$, which is 602$\times$ greater than ours. Regarding the on-chain cost: When there is no dispute, i.e., the normal case, the cost of Agatha is equivalent to $\sim$3 ETH-transfer transactions; When the two disputing parties fight all the way to the arbitration, i.e., the worst case, the cost is equivalent to $\sim$86 ETH-transfer transactions. These performance and cost numbers demonstrate Agatha's practicality for DNN computation, given both on-chain and off-chain overhead.

\section{Smart Contract for Complex Computations}\label{sec:pre}

In this section, we give the background about smart contract for complex computation, which is the approach Agatha follows.

As mentioned in the introduction, our research is inspired by off-chain execution approaches with contract arbitration, such as Truebit \cite{truebit}, Arbitrum \cite{arbitrum}, YODA \cite{yoda} and ACE \cite{ace}.
Figure \ref{fig:ABV} shows the essence of the approach. The base is the smart contract, which runs on a peer-to-peer network of a huge number of nodes.
Because the on-chain execution needs to always maintain a worldwide consensus, it cannot afford to run expensive computations. The basic idea of previous approaches is to have a small number of off-chain nodes (e.g., 10 independent parties or a dynamic quorum). The off-chain nodes can be elected either by network reputation or registration via deposit on the smart contract.

Figure \ref{fig:ABV} shows the steps in these approaches. For the initialization step \circled{0}, the requestor sends the computation task $\phi$ and input data D to the off-chain nodes while optionally making corresponding commitment\footnote{The requestor implicitly exists as the transaction senders while the commitment is omitted in the aggregated execution of transactions \cite{optimism}.}. In step \circled{1}, the first node that finishes computation, which is called the submitter in this paper, claims ``$\mathcal{R}=\phi(\mathcal{D})$''. This means that $\mathcal{R}$ is the result of computation $\phi$ on input $\mathcal{D}$. $\phi$ is too expensive for the smart contract to re-execute, so it is only submitted to the verifiers, which are the rest of the off-chain nodes. In step \circled{2}, every verifier independently validates the claim. Suppose the lower-left verifier declares that the claim is wrong, it starts a \textit{pinpoint protocol}, involving the verifier itself, the submitter and the smart contract. This is shown in step \circled{3}. The verifier tries to disprove the claim by pinpointing one concrete erroneous step in it, with the contract arbitrating in step \circled{4}. Obviously, arbitrating about a single step is easy for the smart contract. In the end, if the submitter survives all challenges from verifiers for a pre-defined period $T^v$, ``$\mathcal{R}=\phi(\mathcal{D})$'' is accepted by the smart contract.

\begin{figure}[th]
  \centering
       \vspace{-10pt}
  \includegraphics[width=0.40\textwidth]{./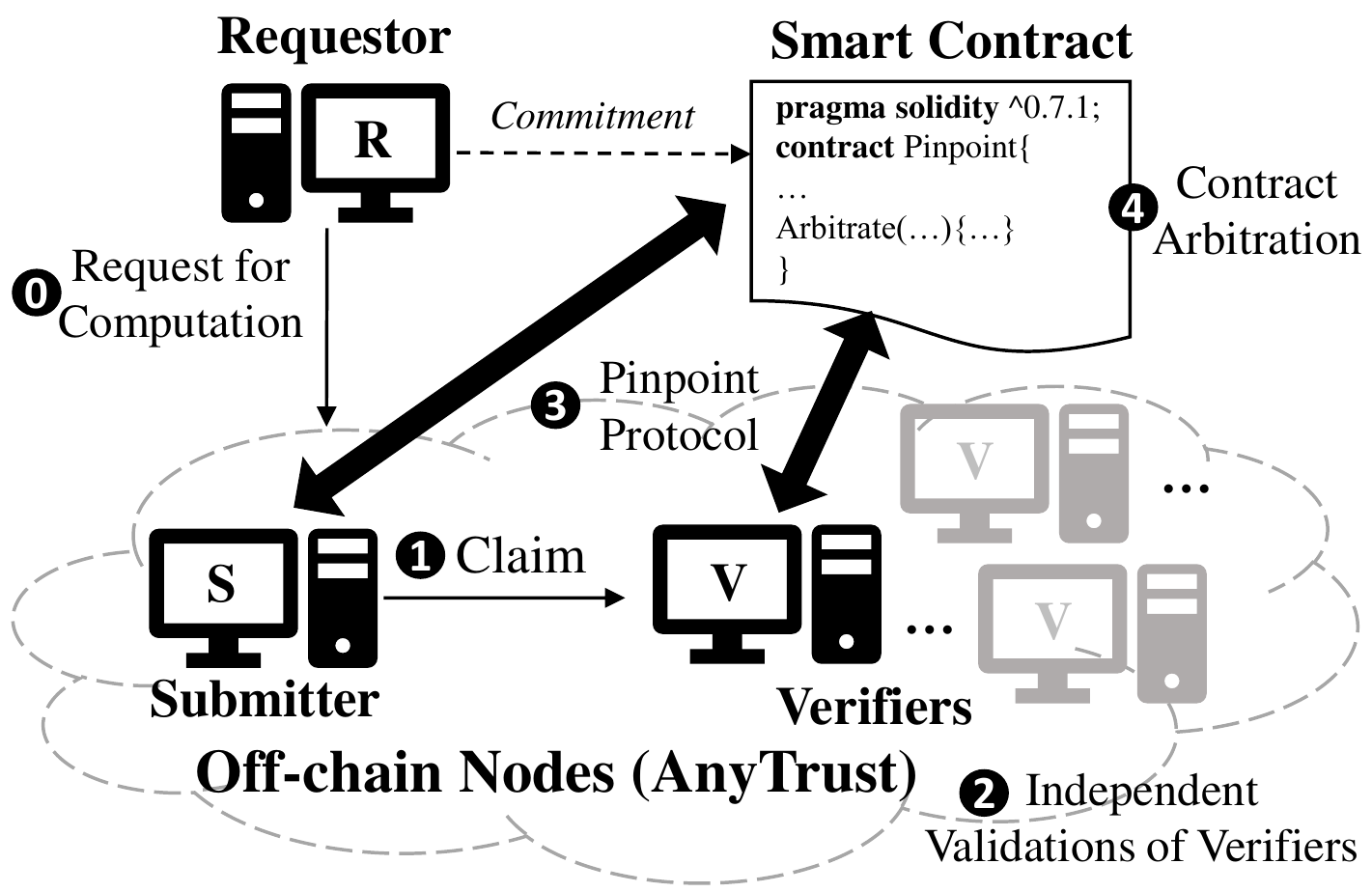}
  \vspace{-5pt}
  \caption{The essence of the previous approaches} \label{fig:ABV}
  \vspace{-5pt}
\end{figure}

There are various assumptions about the verifiers to guarantee security and liveness, such as AnyTrust \cite{arbitrum}, Financially Rational Players \cite{truebit} and the hybrid Byzantine assumption \cite{ace,yoda}. For simplicity, we use the AnyTrust assumption from Arbitrum \cite{arbitrum}.

\para{The AnyTrust assumption.} AnyTrust, rather than ``majority trust'', assumes that, for every claim, there is at least one honest node. Namely, either the submitter is honest, or at least one verifier is honest and will challenge within the pre-defined period. Suppose there are $m$ verifiers, $m-1$ of whom collude to stay silent about a submitter's wrong claim. Under AnyTrust, the only honest verifier will still succeed in disproving the wrong claim in front of the smart contract, and the wrong claim is thus rejected (formal definitions in Section \ref{sec:security}). Like the previous work \cite{arbitrum, truebit}, we also assume data availability  (i.e., $\mathcal{D}$ and $\phi$ should be accessible to all verifiers) and anti-censorship (i.e., every verifier can always interact with the contract). They are solved by orthogonal countermeasures \cite{ipfs, eclipse}.

\para{The pinpoint protocol.} The pinpoint protocol\footnote{Equivalent terminologies include {\em Bisection protocol} \cite{arbitrum} and {\em Verification Game} \cite{focs,truebit}.} also needs more explanations.  %
The goal of the protocol is shown in Figure \ref{fig:pinpoint}. The computation $\phi$ consists of a sequence of VM instructions (denoted as VMI$_1$, VMI$_2$, $\dots$, VMI$_n$) and the initial VM state is S$_0$. However, the verifier and the submitter get different states S$_n \neq$ S$'_n$. The goal of the protocol is to pinpoint VMI$_k$, such that S$_{k-1} =$  S$'_{k-1}$ but S$_k \neq$  S$'_k$, and to send (VMI$_k$, S$_{k-1}$, S$_k$) to contract for arbitration. The submitter and the verifier are forced to get the same $k$ with a challenge-response mechanism with a timeout penalty \cite{arbitrum,fairswap,truebit}.
The pinpoint protocol is very efficient: for time complexity, the verifier and the submitter only need $O(\log n)$ rounds of challenge-response for both parties to commit to the number $k$, and the contract only needs a constant time to arbitrate the disagreement about VMI$_k$; regarding the space complexity, the state is structured as a Merkle tree~(MT) \cite{mt}, corresponding to a logarithmic-size message to the contract.

\begin{figure}[t]
  \centering
  \includegraphics[width=0.42\textwidth]{./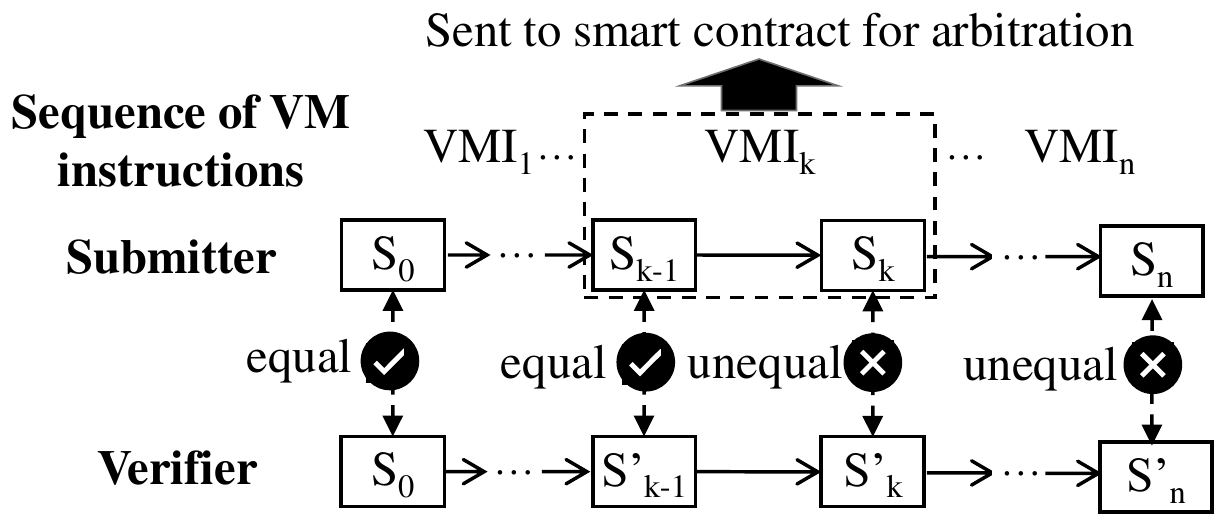}
  \vspace{-10 pt}
  \caption{Pinpointing a disputed step in VM execution} \label{fig:pinpoint}
  \vspace{-15pt}
\end{figure}

It is worth emphasizing that two conditions are needed for the pinpoint protocol to work: (1) The pinpoint mechanism must be able to locate the disputed computation step, and the contract must be able to arbitrate it efficiently. (2) For every step of the computation, there is only one correct output for a given input. Otherwise, the contract cannot verify its correctness. To achieve the two conditions, the computations handled by previous technologies are written as sequential VM
instructions, which fits the instruction-register-memory paradigm. In addition, features such as multi-threading and floating-point are eliminated to guarantee the cross-platform consistency. For example, Arbitrum VM is based on EVM \cite{evm} while TrueBit leverages a restricted WebAssembly \cite{wasm}.

\para{The overhead of off-chain computation.} Besides the overhead of contract execution (i.e., on-chain part), the overhead of off-chain execution is also critical since the off-chain overhead determines the workload and throughput of off-chain nodes. For example, submitter or verifiers in TrueBit \cite{truebit} earn bounties proportional to the length of VM instructions, which is the cost to post the computation task for the requestor; Arbitrum VM leverages the Intel SHA extension \cite{sha} to achieve a high off-chain throughput. Furthermore, in DeFi applications, the off-chain latency directly determines the efficiency of the market because of the off-chain racing counterparties (e.g., liquidity providers \cite{balance} and traders \cite{trader}).

\section{Agatha Overview} \label{sec:overview}

The Agatha system consists of a smart contract deployed on Ethereum and several independent {\em Agatha clients}. For a DNN computation task, the Agatha system focuses on the verification process after one Agatha client submits a result of the task (specifically, the hash value of the result). If any client finds the result incorrect, and then wins the challenge-response dispute following the Agatha pinpoint protocol, the contract will reject the submitter's result. Otherwise, the submitter's result is accepted.

Similar to the previous technologies, Agatha has a workflow to reduce the on-chain overhead. However, the key difference is that Agatha enables the native DNN computation which reduces the off-chain overhead significantly. The design of the Agatha system is shown in Figure~\ref{fig:overview} and explained in this section.

\begin{figure}[h]
  \centering
  \includegraphics[width=0.46\textwidth]{./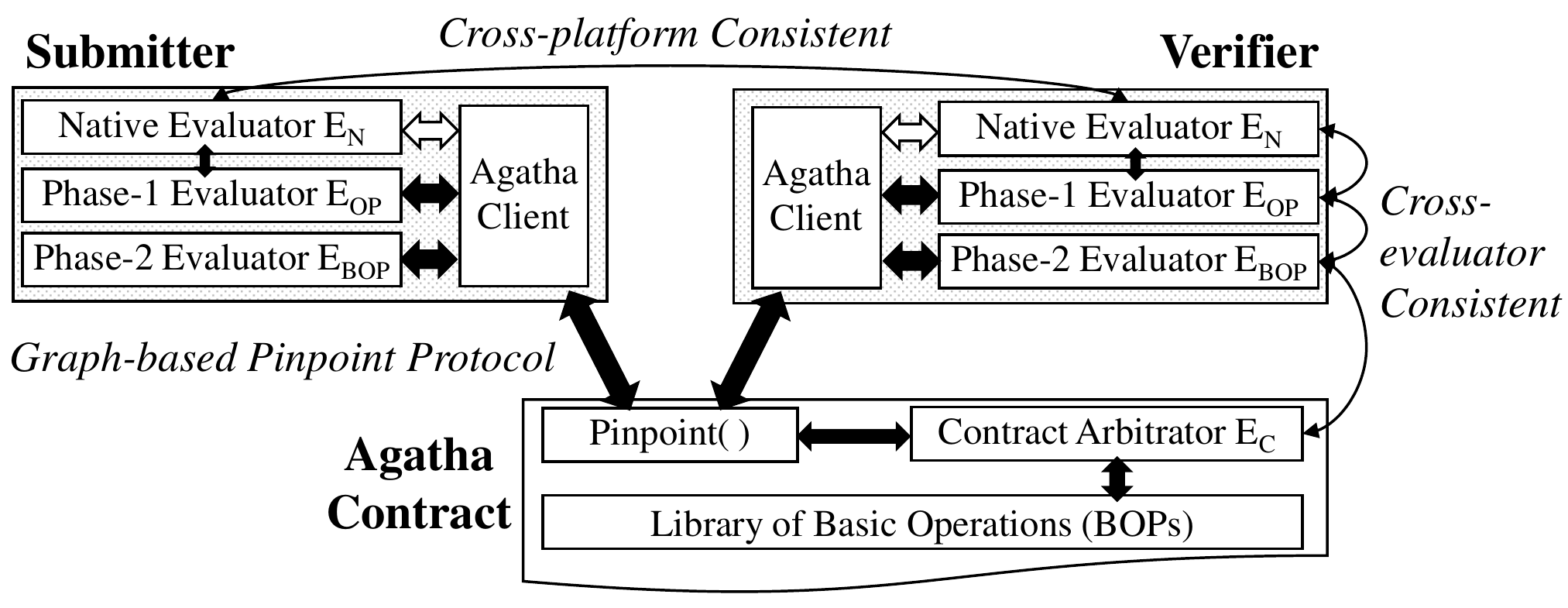}
  \caption{The overview of Agatha system} \label{fig:overview}
\end{figure}

\subsection{Native execution}

\begin{table}[b]
\centering
\footnotesize
\setlength{\tabcolsep}{10pt}
\renewcommand{\arraystretch}{1.05}
\begin{tabular}{@{}lccc@{}}
\toprule
\textbf{Steps} & \textbf{Previous} & \textbf{Hypothetical} & \textbf{Agatha} \\ \midrule
Submitter's execution & VM & Native & Native \\
Verifiers' validation & VM & Native & Native \\
Pinpoint protocol & VM & VM & Semi-native \\
Contract arbitration & VMI & VMI &  BOP \\ \bottomrule
\end{tabular}
\caption{\protect\centering Comparison of the previous work, a hypothetical method and Agatha}
\label{tab:overview}
\end{table}

The core innovation of Agatha is enabling native execution in scalable contracts for DNN computation  (see Table \ref{tab:overview}). There is a native evaluator $E_N$ in each client. In Agatha, both submitter and verifier execute DNN computation with multi-threaded, highly optimized and hardware-accelerated $E_N$, instead of a restricted single-threaded VM. Using $E_N$ significantly reduces both the submitter's execution latency and the verifier's validation latency. For example, a multi-threaded $E_N$ only takes 0.076 seconds to run a VGG16 inference (see Section~\ref{sec:perf}); an ideal VM (simulated by a single-threaded, highly optimized native evaluator) takes 0.439 seconds, 5.78 times more than the $E_N$;  restricted VMs such as Arbitrum VM are typically 800 times slower than the ideal VM. If $E_N$ is further accelerated by GPUs, its performance improvement over restricted VMs can be three orders of magnitude.

Although some previous VM-based schemes have imagined combining native execution, VM-based pinpoint protocol and arbitration~\cite{truebit}, they provide no detail. We consider this highly difficult, because of the paradigm misalignment%
. To enable pinpoint protocol and contract arbitration, the VM paradigm takes a sequence of VM instructions and an initial state as inputs, and outputs a final state. However, a multi-threaded $E_N$ takes a function and some data as inputs, and outputs the result. If we want to bridge them, there are two potential routes. First, if we define $\phi$ as VM instructions, we need to transcode the serialized VM instructions to a native multi-threaded program and ensure their consistency. Second, if we define $\phi$ as a function, the verifier and the submitter must have consensus on how $\phi$ is represented by a specific sequence of VM instructions. However, both routes are hugely unnecessary detours. As we will explain next, the graph-based paradigm is a far more direct representation of DNN computation. %

\subsection{Unified graph-based paradigm}

Agatha forgoes the VM-based paradigm, which is a wrong abstraction for DNN computation. A DNN, as a function, says nothing about instructions, registers, memory, etc. Alternatively, following the mainstream paradigms~\cite{onnx, tensorflow, pytorch}, we express a DNN computation as a computation graph consisting of \textit{operations}. A computation graph is a directed acyclic graph with operations as its nodes and tensors as its edges. For example, Figure \ref{fig:onnx} shows the computation of ResNet50~\cite{resnet}, and the internal computations of \textsf{Conv}, \textsf{Relu} and \textsf{Gemm} are shown in the figure. An operation is defined mathematically using \textit{basic operations}, such as summation, max and multiplication. In the rest of the paper, we use the terminology ``basic operation'', or {\em BOP}, to refer to an individual computation of numbers. The terminology ``operation'' or {\em OP}, without a preceding ``basic'', means the computation of tensors.

\begin{figure}[thb]
  \centering
  \vspace{-10pt}
  \includegraphics[width=0.43\textwidth]{./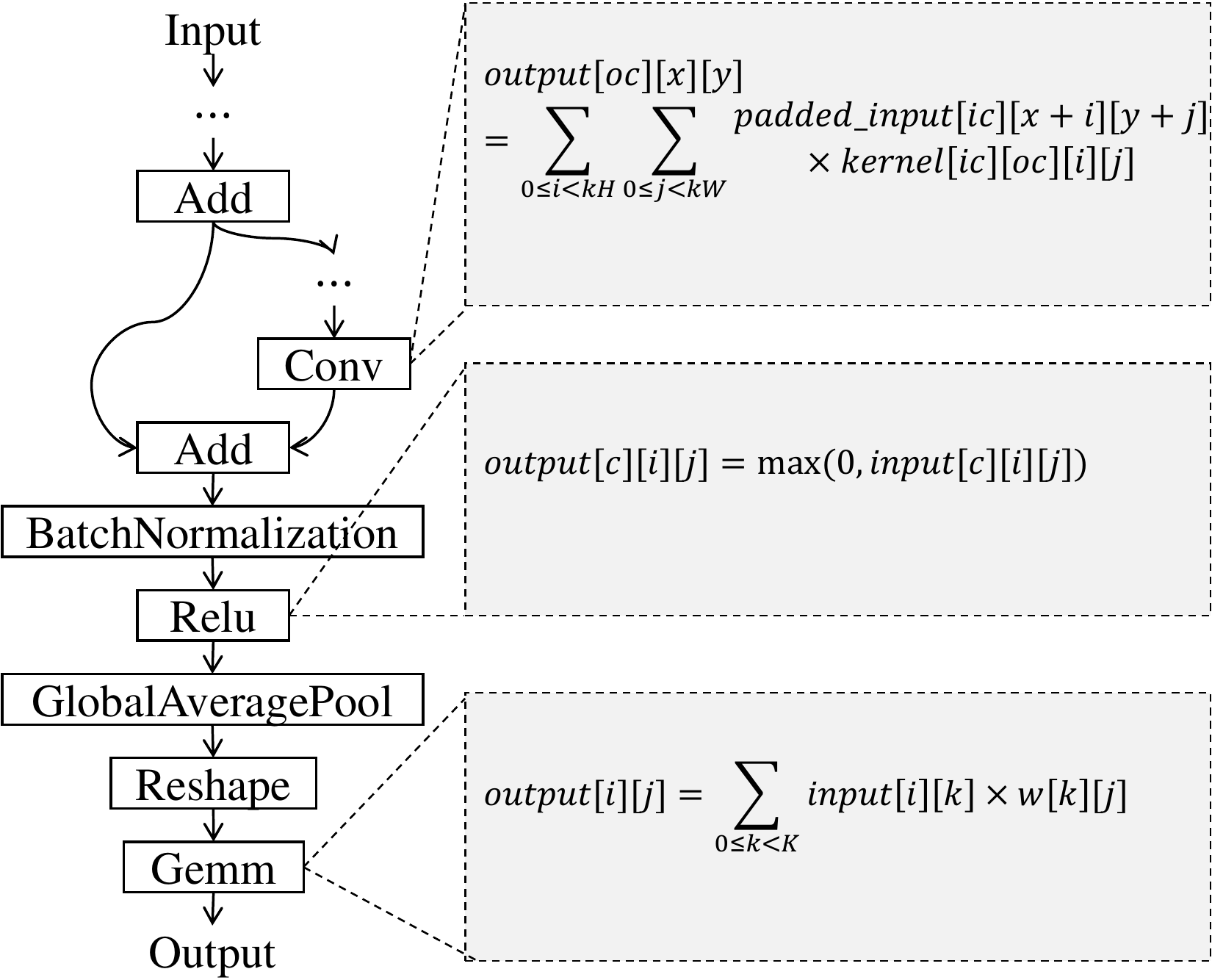}
  \vspace{-5pt}
  \caption{The computation graph of ResNet50} \label{fig:onnx}
  \vspace{-10pt}
\end{figure}

Using the graph representation provides a unified paradigm from native execution to contract arbitration. It also has two main advantages. First, we can leverage existing high-performance graph evaluators, such as ONNX Runtime~\cite{onnxruntime}, TensorFlow~\cite{tensorflow} and PyTorch~\cite{pytorch}, as our $E_N$. Second, as we will show next, we can design a simple and efficient pinpoint protocol for DNN computation.

Each Agatha client leverages a phase-1 evaluator $E_{OP}$ and a phase-2 evaluator $E_{BOP}$, used for our two-phase pinpoint protocol. In phase 1, the DNN computation in dispute invokes $E_{OP}$ to compute the results of the operations, and then the Agatha contract can locate an operation in dispute. In phase 2, the operation in dispute further invokes $E_{BOP}$ to generate the results of the basic operations, and thus the contract can locate a basic operation in dispute. After pinpointing, the contract performs the arbitration by calculating the basic operation on Ethereum with its on-chain evaluator $E_C$.

The simplicity comes from the graph representation. For this protocol, the arbitration contract only needs to understand integer and floating-point arithmetic, but not other complicated machinery; the client only needs to understand computational graphs, but not sophisticated VM instructions and states.

The efficiency benefits from both native execution and the nature of DNN computation. $E_{OP}$ can invoke the high-performance $E_N$, and $E_{BOP}$ can make arbitration simple. Imagine two alternative pinpoint protocols. The first one, without phase 1, uses $E_{BOP}$ to generate internal intermediate results of the whole graph, and lets the contract locate a basic operation in dispute. This is too inefficient for a DNN computation, which usually has billions of basic operations. The second one, without phase 2, demands the contract to directly locate an operation in dispute, and then arbitrate the operation by re-executing it on-chain. This is not practical for DNN computations, whose operations are too complex to implement and execute in smart contracts.

\subsection{Consistency}
\label{sec:consistency}
Forgoing the VM-based paradigm brings a new issue about verification -- inconsistency. Using the graph-based paradigm, we notice that operations in DNN computation are only defined as mathematical formulas, but the details of the implementations still have flexibility, such as the precision and the order of internal basic operations. Consequently, with the same input, the output of DNN computation, usually a floating-point vector representing confidence values, is not strictly consistent. For example, we test one \textsf{Gemm} operation with the same inputs on different software/hardware platforms. The results are shown in Table \ref{tab:inconsistent}. The tests in Table \ref{tab:inconsistent}(a) run the \textsf{Gemm} operation with different $E_N$ on an i7-8700 processor with 12 logical cores. ``Serialization'' refers to the straightforward for-loop implementation of matrix multiplication, similar to our $E_{BOP}$. Others are popular software for DNN computation. We can see that their outputs are different from each other (i.e., with different hash values). More interestingly, we test PyTorch and find that even the same software on different hardware yields different results, as shown in Table \ref{tab:inconsistent}(b).

\begin{table}[thb]
    \centering

  \begin{subtable}[b]{0.21\textwidth}
    \caption{Different software} \label{tab:runtime}
    \footnotesize
    \setlength{\tabcolsep}{4pt}
 \renewcommand{\arraystretch}{1.1}
    \begin{tabular}{lc}
\hline
Software     & Result Hash \\ \hline
NumPy       & \texttt{068754...}     \\
ONNX Runtime & \texttt{c2baad...}    \\
PyTorch     & \texttt{630fa0...}    \\
TensorFlow  & \texttt{50bd3a...}    \\
Serialization   & \texttt{c415be...}    \\
\hline
\end{tabular}
  \end{subtable}
  \hspace{12pt}
  \begin{subtable}[b]{0.2\textwidth}
    \caption{Different hardware}     \label{tab:hardware}
    \footnotesize
    \setlength{\tabcolsep}{3pt}
 \renewcommand{\arraystretch}{1.1}
    \begin{tabular}{lc}
\hline
Hardware              & Result Hash \\ \hline
4 vCPU (i7-8700)  & \texttt{fd8858...}    \\
8 vCPU (i7-8700)  & \texttt{c53a15...}    \\
12 vCPU (i7-8700) & \texttt{630fa0...}    \\
16 vCPU (Skylake)     & \texttt{068754...}     \\
16 vCPU (Broadwell)   & \texttt{f71df1...}    \\

\hline
\end{tabular}
  \end{subtable}
  \caption{\protect\centering The inconsistency among the results of the same \textsf{Gemm} operation on different platforms}
  \label{tab:inconsistent}
\end{table}

With this issue in mind, we can revisit the pinpointing and arbitration mechanism and understand two major challenges about verification -- cross-platform consistency and cross-evaluator consistency. Without cross-platform consistency, the verifiers are unable to tell if a submitted claim is correct by re-executing it. On smart contracts, the equality of two large data objects is checked by comparing their secure-hash values or Merkle tree roots. Equality means an exact match -- even one-bit difference will fail the equality test. Even if someone would invent an equality test that could tolerate a degree of imprecision, it would not satisfy the requirement, as we observe that the imprecision of some intermediate operations cannot be bounded. For example, the reciprocal of a small value is very sensitive to its imprecision. Therefore, we must tackle the challenge directly, so that every $E_N$ can make results consistent down to every bit.

Cross-evaluator consistency guarantees the correctness of honest clients in our pinpoint protocol and naturally leads to the cross-platform consistency (details in Section \ref{sec:security}). Specifically, for a computational graph, $E_N$ and $E_{OP}$ should output a consistent result; for an operation, $E_{OP}$ and $E_{BOP}$ should output a consistent result; for a basic operation, $E_{BOP}$ and $E_C$ should output a consistent result. Like cross-platform consistency, our goal is to identify and eliminate all sources of inconsistency among the four evaluators.

\section{Graph-Based Pinpoint Protocol}\label{sec:pinpoint}

As mentioned earlier, pinpoint protocol is the bridge between native evaluator $E_N$ and contract arbitrator $E_C$. To align $E_N$ and $E_C$, Agatha does not design the pinpoint protocol based on VM, which is not aligned with the intrinsic nature of DNN computation. Instead, we design a graph-based pinpoint protocol (GPP). GPP includes the following three components and defines how to use them for the pinpoint protocol: (1) Phase-1 evaluator $E_{OP}$ which aligns with $E_N$ and executes in the granularity of operation; (2) Phase-2 evaluator $E_{BOP}$ which aligns with both $E_C$ and $E_{OP}$, and executes in the granularity of basic operation (BOP), along with the setup tools for $E_{BOP}$; (3) an enhanced evaluator $E_{C}$ which can arbitrate the floating-point arithmetic in BOPs.

Next, we begin with explaining the components for $E_{BOP}$, because $E_{BOP}$ plays a central role in GPP. It not only determines the BOPs that $E_C$ needs to support, but also is the basis of $E_{OP}$.

\subsection{Enabling graph-based pinpoint with $E_{BOP}$} \label{sec:bop}

\para{Circuit generation.} Representing a DNN computation as a computation graph can be considered in two levels: the operation (i.e., OP)  layer and the basic operation (i.e., BOP) layer, as discussed earlier. The first-level computation graph is already generated explicitly \cite{onnx}, similar to the one in Figure \ref{fig:onnx}. However, we need the ability to generate the second-level computation graph (i.e., circuit). Table \ref{tab:bop} shows the BOPs we support. They are in four categories: floating-point arithmetic, integer arithmetic, assignment and typecasting. We have confirmed that every operation in our runtime can be expressed as a circuit with these BOPs.

\begin{table}[thb]
\centering
\footnotesize
\setlength{\tabcolsep}{7pt}
\renewcommand{\arraystretch}{1.2}
\begin{tabular}{ll}
\hline
{\bf Categories}    & {\bf Basic Operations }                                                                                                \\ \hline
\begin{tabular}[c]{@{}l@{}} Floating-point \\Arithmetic \end{tabular}& \begin{tabular}[c]{@{}l@{}}f32\_add, f32\_sub, f32\_mul, f32\_div, f32\_min, \\ f32\_max, f32\_sqrt, f32\_round, f32\_floor\end{tabular} \\ \hline
Integer Arithmetic        & i32\_add, i32\_sub, i32\_mul                                                                                                 \\ \hline
Assignment          & =   \hspace{20pt} ( the operand type can be f32, i32 or u8)    \\ \hline
Type Casting        & \begin{tabular}[c]{@{}l@{}}f32\_to\_u8,  u8\_to\_f32, u8\_to\_i32, \\ i32\_to\_f32, f32\_to\_i32\end{tabular}          \\ \hline
\end{tabular}
\caption{Eighteen basic operations (BOPs) in this work}
\vspace{-20pt}
\label{tab:bop}
\end{table}

We implement a rule-based circuit generator for conventional OPs in DNN computation. The generator handles the operations with different shapes and other parameters. We have confirmed that the generated circuits and the operations are {\em logically} identical, but Section \ref{sec:determine} will discuss the inconsistency situations when {\em concretely} evaluating them.
Figure \ref{fig:circuit} shows a tiny example of the generated circuit for an integer matrix multiplication (i.e., \textsf{MatMulInteger}). Figure \ref{fig:circuit}(a) shows the multiplication, and Figure \ref{fig:circuit}(b) shows the circuit. The circuit is a directed acyclic graph (DAG). The vertices of the graph are BOPs (e.g., i32\_add and i32\_mul, shown as ``+'' and ``$\times$''). Two special vertices, shown as
``in'' and ``out'', are the source and the sink of the graph. The directional wires represent the variables of the computations, including the input variables (i.e., in$_1\sim$ in$_6$), intermediate variables (i.e., v$_1\sim$ v$_4$), and the output variables (i.e., out$_1$, out$_2$). %
We also use v$_5$ and v$_6$ as the aliases of out$_1$, out$_2$, respectively.

Since DNN computation is already expressed as a graph of operations, with the circuit generator, we can represent the whole DNN computation as a giant circuit, which involves billions of BOPs.

\begin{figure}[thb]
  \centering
    \vspace{-5pt}
  \includegraphics[width=0.46\textwidth]{./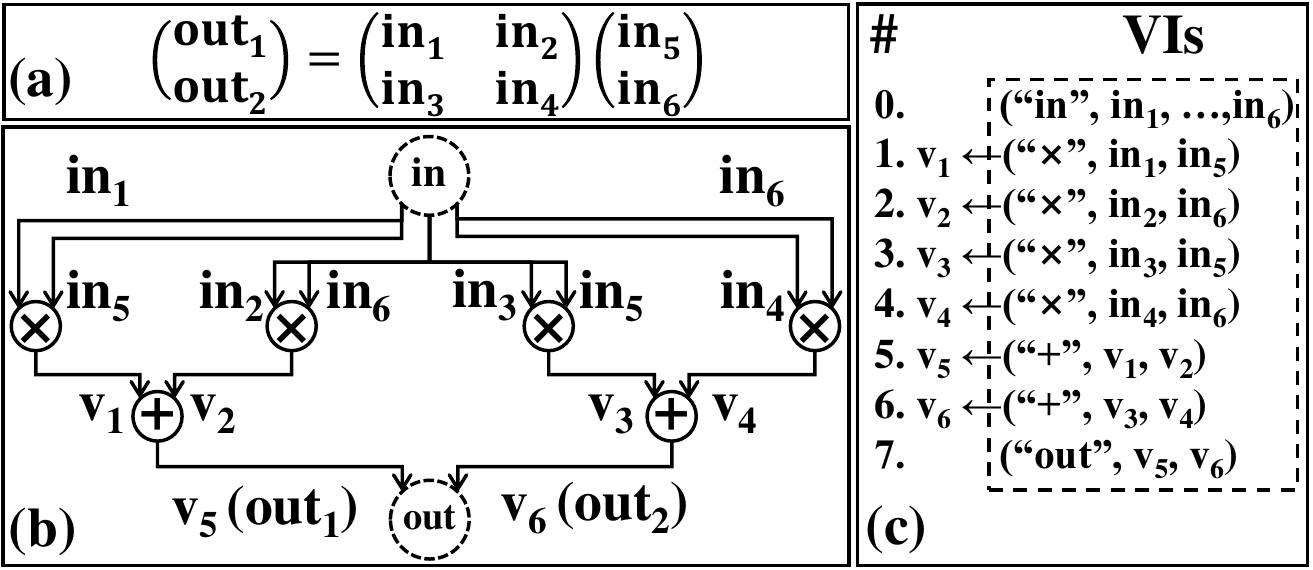}
   \vspace{-10pt}
  \caption{\protect\centering A circuit of matrix multiplication: (a) the formula, (b) the circuit, (c) serialized Vertices with Inputs (VIs).} \label{fig:circuit}
  \vspace{-5pt}
\end{figure}

\para{Circuit serialization.} Because the computation graph $\textrm{G}=(\textrm{V},\textrm{E})$ is a DAG, a topological sorting (TS) \cite{ts} can be used to serialize it, which has a complexity of $O(|\textrm{V}|+|\textrm{E}|)$. Figure \ref{fig:circuit}(c) shows the serialization result of the graph in Figure \ref{fig:circuit}(b). Every item in the sequence, except the first and the last, is denoted by its BOP and the variables on its incoming edges. We call each of these items a {\em ``vertex with inputs''}, or {\em VI}.

A VI has two forms: the symbolic form and the concrete form. In the symbolic form, referred to as VI$^S$, every variable is assigned a unique name by the circuit generator, such as in$_3$ and v$_4$. For example,  (``$\times$'', in$_1$, in$_5$) can be a VI$^S$. In the concrete form, referred to as VI$^C$, every variable is replaced by its actual value. For example,  (``$\times$'', 1, 5) can be a VI$^C$.

Therefore, the pinpoint problem can be defined using VI$^S$ and VI$^C$ as follows. {\em Prerequisite}: the submitter and the verifiers agree on the same VI$^S$-sequence and all input values. {\em Problem}: the submitter and a verifier get different VI$^C$-sequences, so the smart contract needs to determine who is incorrect (note that it does not imply that the other is correct).

\para{Circuit evaluator $E_{BOP}$.} $E_{BOP}$ produces the VI$^C$-sequence based on the VI$^S$-sequence and the concrete values of the inputs. It sequentially outputs every VI$^C$ using the corresponding VI$^S$. Let's assume (in$_1$, ..., in$_6$) = (1, ..., 6) in Figure \ref{fig:circuit}(a). First, VI$^C_0$ is initiated as (``in'', 1, ..., 6). Then, the evaluations for VI$^C_1$, ..., VI$^C_6$ are performed, based on VI$^S_1$, ..., VI$^S_6$. The results of these VI$^C$s are stored. For example, VI$^C_5$ refers to v$_1$ and v$_2$, so the evaluator loads the output of VI$^C_1$ (i.e., v$_1$ = 5) and VI$^C_2$ (i.e., v$_2$ = 12), respectively. The process continues until the last VI$^C$, i.e., VI$^C_7$ = (``out'', 17, 39) is produced.

\para{Commitment scheme.} With the setup work of circuit generation, serialization and the evaluator, when a verifier disagrees about the computation result of the submitter, the pinpoint protocol begins. In this subsection, we assume the whole DNN computation is expressed as a circuit. Then, our pinpoint protocol can be based on a commitment scheme \cite{commitment} leveraging Merkle trees (MTs) and a smart contract. The protocol allows the submitter and the verifier to locate their leftmost divergent VI$^C$. The complexity is $O(\log n)$, where $n$ is the length of the VI sequence. The commitment scheme guarantees the non-repudiation of the disputing parties during the process. The protocol includes one preparation step and three interaction steps, as intuitively shown in Figure \ref{fig:merkle} (formal settings in Section \ref{sec:security}) :
\squishlist
\item{\em Prepare.} Before the pinpoint, with the VI$^S$-sequence as leaves, both the submitter and the verifier generate and reach a consensus on MT$^{S}$. They also agree on the input values (i.e., VI$_0^C$). For example, the consensus can be denoted as a unanimous signature to the contract on the root of MT$^{S}$ and VI$_0^C$. Then, the submitter submits the output of the circuit (e.g., VI$_7^C$) to the contract.

\item {\em Locate the leftmost divergent VI$^C$.} The verifier challenges the submitter by invoking the contract API if the verifier disagrees with the output. The submitter is first required to generate MT$^{C}$ from the VI$^C$-sequence. Then, according to the on-chain challenges of the verifier, the submitter responds to the contract with a complete Merkle path traversing from the treetop to the leaf. The verifier selects the path based on the local MT$^{C}$ in order to find the leftmost VI$^C$ divergent from the submitter's. During the process, the contract keeps validating if the Merkle path of the submitter is correct and follows verifier's challenges. Thus, the submitter cannot cheat to regenerate an MT$^{C}$ while the verifier cannot deny the challenge requests sent to the contract.\footnote{If the submitter aborts the process, the verifier invokes the timeout function of the contract to mark the submitter as ``incorrect''.} Suppose that the VI$_5^C$ is located, which means that VI$^C$s before the VI$_5^C$ are consistent.\footnote{If the submitter and the verifier locate the leftmost divergent VI$^C$ as VI$_0^C$, our protocol still works since VI$_0^C$ has a unanimous commitment.} By referring to the definition of VI$_5^C$, the submitter and the verifier dispute either about the concrete input (i.e., the value of v$_1$ or v$_2$) or about the BOP (i.e., ``+'').

\item {\em Upload the MT$^{S}$ path.} If the disagreement is about the BOP, the submitter is required to upload a Merkle path of VI$_5^S$ to the contract. The contract verifies the path and checks if the BOP in VI$_5^S$ matches the submitter's. If both pass, the verifier is marked ``incorrect'' by the contract.

\item{\em Upload the MT$^{C}$ path.} If the disagreement is about the input (e.g., v$_2$), the submitter is required to upload: (1) an MT$^{S}$ path of VI$_5^S$ to prove that the output of VI$_2^C$ provides the disputed input; (2) an MT$^C$ path of VI$_2^C$ to prove that submitter's value for v$_2$ is correct. If the contract validates the submitter's uploaded paths and that the output of VI$_2^C$ is submitter's v$_2$, the verifier is marked as ``incorrect'' by the contract. Otherwise, the submitter is marked ``incorrect''.

\squishend

\begin{figure}[thb]
  \centering
  \includegraphics[width=0.43\textwidth]{./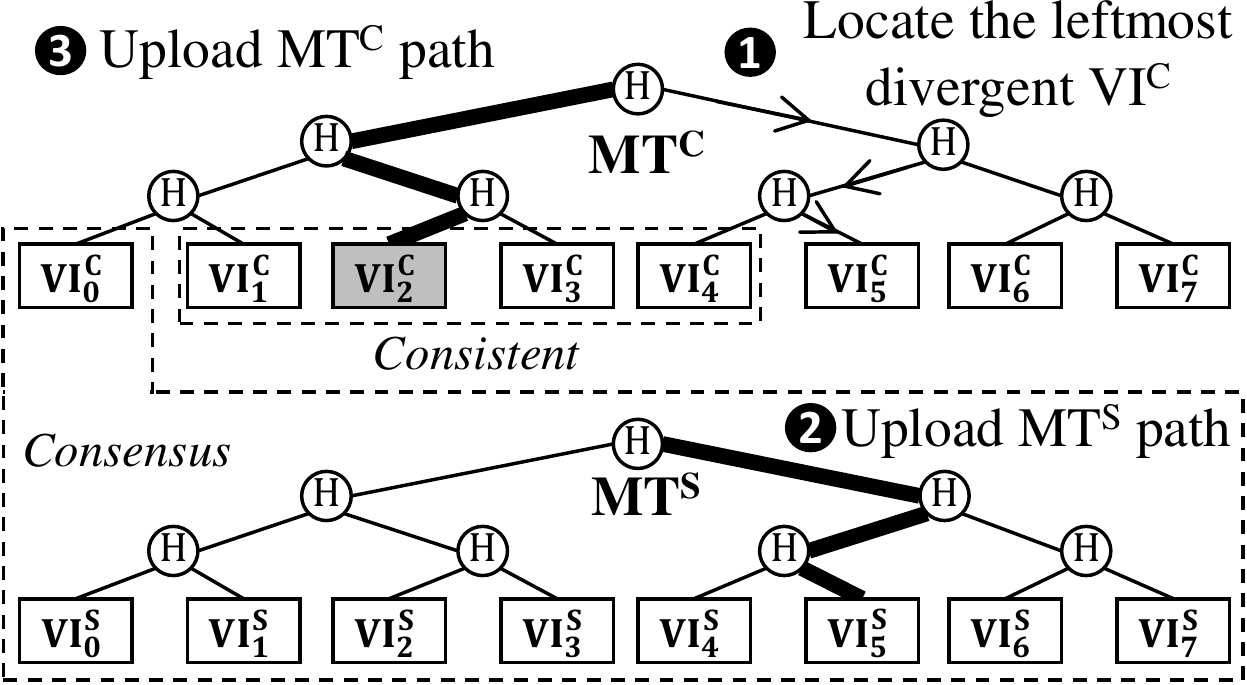}
  \vspace{-10pt}
  \caption{\protect\centering The MT-based commitment scheme to locate and arbitrate the divergent VI.} \label{fig:merkle}
  \vspace{-10pt}
\end{figure}

In the description above, for simplicity, the Merkle trees are shown as binary trees. In fact, they can be $k$-ary. Choosing the $k$ is a trade-off between the tree generation time, the number of interaction rounds and the gas consumption. In Agatha, we choose $k = 32$. The details will be provided in Section \ref{sec:eval}.

Up to this point, we have explained how to pinpoint a disputed BOP in a DNN computation, assuming the protocol can process the entire computation graph. However, this is difficult in reality. We will show in Section \ref{sec:eval} that the number of basic operations can be as large as 30 billion, the memory size to include these BOPs is over 600 GB. If we assume the number of OPs in the first-layer is $n_1$ and each OP has $n_2$ BOPs, the space complexity would be $O(n_1 \cdot n_2)$ either for generating or for evaluating the graph. To reduce the significant overhead, we propose two-phase pinpoint protocol, which introduces the Phase-1 evaluator.

\subsection{Introducing $E_{OP}$ for two-phase pinpoint} \label{sec:twophase}

To make pinpoint protocol practical, the pinpointing process needs to be done in two phases. Phase-1 identifies the disputed OP with another highly efficient evaluator $E_{OP}$. As shown in Figure \ref{fig:twophase}(a), the leaves of the Merkle tree for Phase-1 are OPs, so the size of the tree (i.e., P1\_MT$^C$) is much smaller. The procedure in Phase-1 is as what we described above, with two important details to note: (1) the VI represents an OP (e.g., (\textsf{MatMulInteger}, $\vec{in}$)) not a BOP (e.g., ``$\times$'', $in_1$, $in_5$). Variables $\vec{in}$ can be
tensors with arbitrary shape and data type.
The ``BOP'' field of VI in the previous section is replaced as one of operations (e.g., \textsf{Conv}) that takes input variables with certain dimensions. In our protocol, this OP is represented by the Merkle root of MT$^{S}$ of the operation's symbolic circuit (i.e., P2\_MT$^S$), rather than the string ``\textsf{MatMulInteger}'', so it is a well-defined concrete computation. (2) Different from evaluating a circuit, $E_{OP}$ leverages $E_N$ to evaluate every operation with native performance and computes based on those operation results, which is a feature of current DNN tool-chain \cite{onnxruntime}.

\begin{figure}[thb]
  \centering
  \includegraphics[width=0.46\textwidth]{./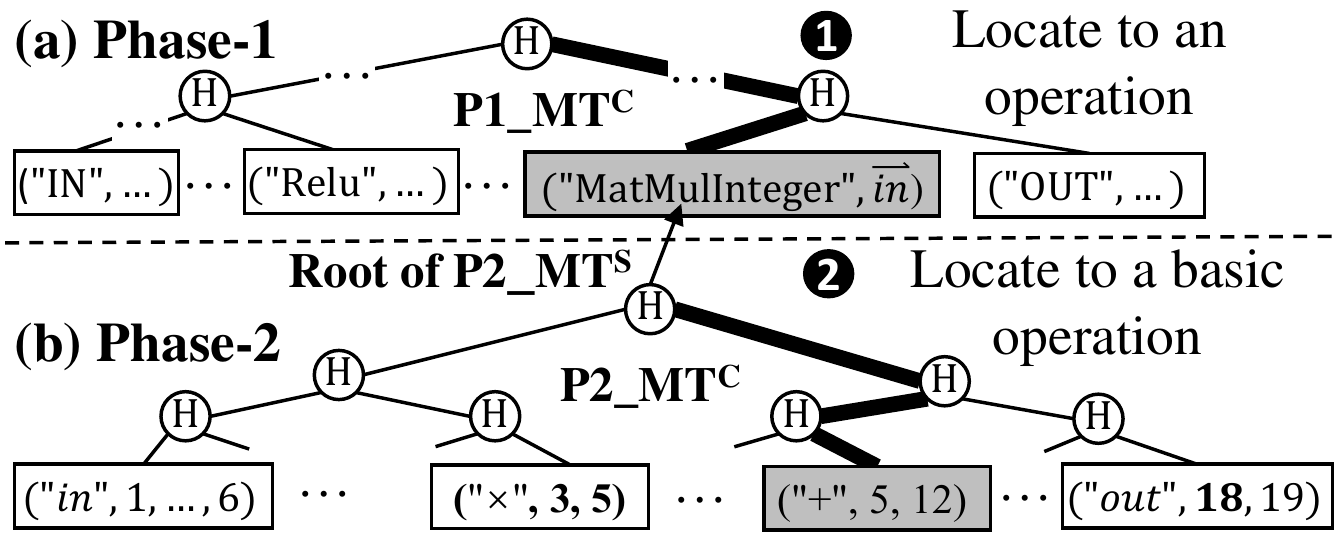}
  \vspace{-5pt}
  \caption{Two-phase pinpointing: (a) Phase-1 (b) Phase-2} \label{fig:twophase}
  \vspace{-10pt}
\end{figure}

Once the disputed operation is pinpointed in Phase-1, the submitter and the verifier start Phase-2, which is within the Merkle tree P2\_MT$^C$ in Figure \ref{fig:twophase}(b). Generating P2\_MT$^C$ and P2\_MT$^S$ is on-demand, so the space complexity of the protocol is reduced to $O(n_1 + n_2)$. In the end, a disputed BOP is submitted to the contract for arbitration.

\subsection{Contract arbitrator $E_{C}$}
We implement the contract arbitrator $E_C$ for all the BOPs in Table \ref{tab:bop} via the integer-based emulation, which is fully compliant with the current Ethereum contract standard. In other words, $E_C$ is aligned with $E_{BOP}$ with the granularity of BOP. The implementation carefully complies with IEEE-754 and demonstrates a negligible on-chain overhead for arbitration (see Section \ref{sec:eval}).

\section{Ensuring Consistency}\label{sec:determine}

Implementing the Agatha system requires several kinds of consistency as introduced in Section~\ref{sec:consistency}. In this section, we systematically explain the methods to eliminate all sources of inconsistency, from basic operations to DNN computation, so that we can ensure both cross-evaluator consistency and cross-platform consistency from bottom up.

\subsection{Basic operations ($E_C-E_{BOP}$ consistency)}
Both $E_{BOP}$ and $E_C$ evaluate basic operations. Among basic operations, integer operations are unambiguous; the main pitfall is due to floating-point operations.

Although the IEEE-754 standard strictly defines floating-point operations, it requires a thorough study to make $E_{BOP}$ and $E_C$ compliant to IEEE-754. For example, we must avoid implementing the minimum and maximum operations of two floating-point numbers by comparison (e.g., using C++ \texttt{std::max} in $E_{BOP}$), which violates the IEEE-754 standard to tackle corner cases such as signed zeros and NaNs. Non-standard mathematical functions, such as x86's \texttt{RCP} and \texttt{RSQRT}, should never be used.

\subsection{Operations ($E_{BOP}-E_{OP}$ consistency)}
Both $E_{OP}$ and $E_{BOP}$ evaluate operations.  $E_{BOP}$ is based on serialization of basic operations and requires the result consistent with $E_{OP}$. $E_{OP}$ evaluates an operation by invoking $E_N$, so it also requires cross-platform consistency. We have conducted a comprehensive study to make Agatha correct and efficient for evaluating operations.

\subsubsection{Inconsistent operations} \label{sec:source}
~

To identify cross-platform and cross-evaluator inconsistent operations, we have systematically reviewed the implementations of operations on the software/hardware platforms listed in Table \ref{tab:inconsistent} (covering different software, different processor micro-architectures, and various numbers of logical cores in the same processor), and investigated all the operations defined in ONNX \cite{onnx} operator set version 7 -- a total number of 100. We identify the inconsistent operations by analyzing their two essential sources:
\squishlisttwo
    \item Non-associativity of floating-point operations
    \item Standard-incompliant floating-point optimizations
    \squishtwoend

\para{Non-associativity.} Floating-point operations, as defined in the IEEE-754 standard \cite{IEEE754}, are not associative \cite{villa2009effects, goldberg1991every, monniaux2008pitfalls}. Thus, any computation that consists of multiple floating-point operations must specify the order of the operations, which is expressed as a computation graph, a.k.a. a \textit{circuit}. %
Otherwise, with different circuit representations, a computation that is consistent in real numbers becomes inconsistent in floating-point numbers.

\textit{Unfortunately, an operation only has a mathematical definition, which implies an ambiguous circuit representation.}  We observe that an operation may be interpreted into different circuit representations on different software/hardware platforms, for performance reasons. For example, Figure \ref{fig:cseq} illustrates three possible circuits for a \textsf{ReduceSum} operation that aggregates eight \texttt{float64} numbers: Circuit (a) shows the straightforward for-loop implementation; (b) shows the implementation using 128-bit SIMD instructions (e.g. SSE); (c) shows the implementation using 256-bit SIMD instructions (e.g. AVX). Besides the SIMD width, we notice that the circuit may also depend on the number of threads, the cache size, and the summation algorithm.

\begin{figure}[thpb]
  \centering
  \includegraphics[width=0.45\textwidth]{./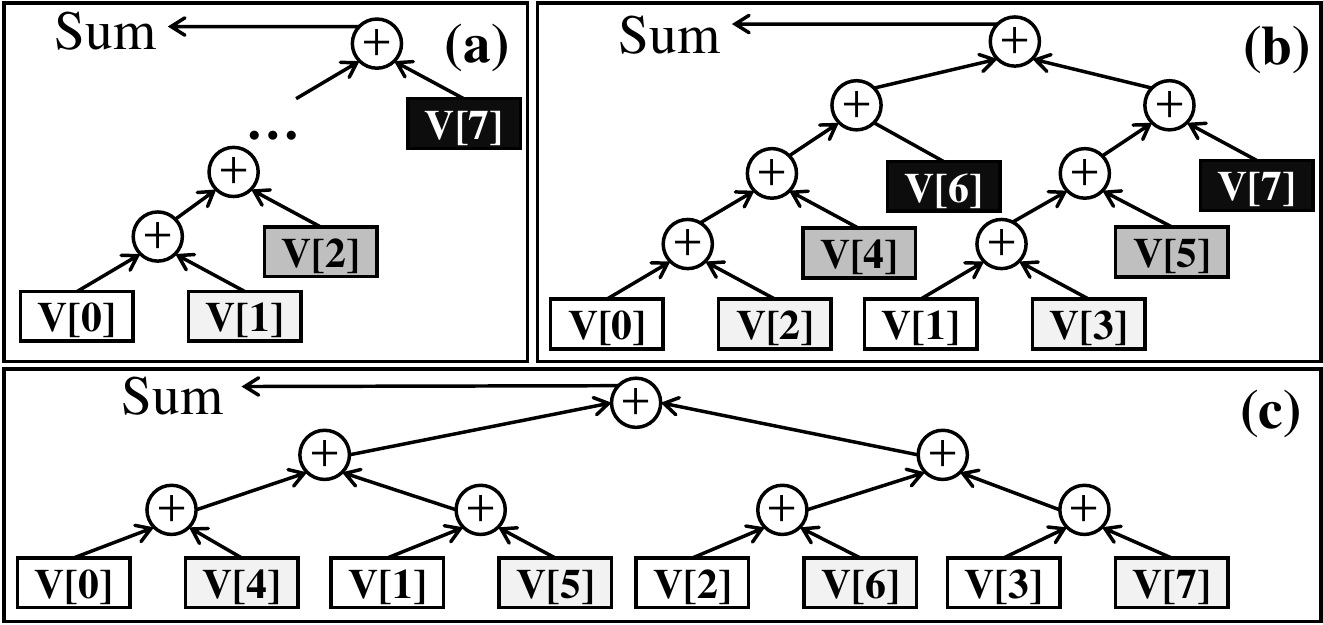}
  \vspace{-5pt}
  \caption{ The circuits with different summation orders: (a) sequential, (b) an order observed in an implantation in SSE (128-bit SIMD), (c) an order observed in AVX (256-bit SIMD). Elements of the same sub-figure with the same grey-scale are packed into one register.} \label{fig:cseq}
  \vspace{-5pt}
\end{figure}

\para{Standard-incompliant optimizations.}
An operation may be inconsistent even if it has an unambiguous circuit representation. We show that the inconsistency comes from three categories of standard-incompliant optimizations.

{\em First, using imprecise implementations.} Although the value of a mathematical function on floating-point numbers is well-defined according to the IEEE-754 standard,  some software enables approximate algorithms or specific hardware instructions to accelerate the mathematical computation at the cost of precision. For example, the Eigen library, which is used by TensorFlow and ONNX Runtime, is configured with ``\texttt{EIGEN\_FAST\_MATH=1}'' by default, so it uses imprecise implementations rather than its standard-compliant implementations. %

{\em Second, value-changing transformations.} Some software applies distributive laws and substitutes division with reciprocal multiplication in order to accelerate operations (such as \textsf{BatchNormalization}). %
However, this transforms the operation's existing circuit representation and may change the result~\cite{IEEE754}. %

{\em Third, improper compilation.} The computation can be non-standard due to improper compilation options. For example, ``\texttt{-{}-ffast-math}'' option in GCC \cite{gcc} enables unsafe compilation optimizations  and violate the IEEE standard even if the software avoids imprecise implementations or value-changing transformations. %

Based on our broad investigation about the operations and the platforms,
we summarize all the inconsistent operations in Table~\ref{tab:operators}. Remarkably, inconsistent operations due to non-associativity exist in almost every DNN computation. \textsf{Gemm} and \textsf{Conv}, shown in Figure \ref{fig:onnx}, are two such pervasively used operations. Thus, it is important to handle the inconsistency issue caused by floating-point non-associativity.

\begin{table}[t]
\centering
\footnotesize
\setlength{\tabcolsep}{2.0pt}
\renewcommand{\arraystretch}{1.1}
\begin{tabular}{cl}
\hline
\textbf{Essential source}  & \multicolumn{1}{c}{\textbf{Operations}} \\ \hline
Non-associativity         &  \begin{tabular}[c]{@{}l@{}}AveragePool, Conv, ConvTranspose, Gemm, \\ GlobalAveragePool,GlobalLpPool, GRU, LogSoftmax, \\ LpNormalization, LpPool, LRN, LSTM, MatMul, \\ Mean, ReduceL1, ReduceL2, ReduceLogSum, \\ ReduceLogSumExp, ReduceMean, ReduceProd, \\ ReduceSum, ReduceSumSquare, RNN, Softmax, Sum\end{tabular}   \\ \hline
\begin{tabular}[c]{@{}c@{}}Standard-incompliant\\ optimizations\end{tabular} &   \begin{tabular}[c]{@{}l@{}}\textbf{\textit{Square root}}: BatchNormalization, GlobalLpPool, \\ InstanceNormalization, LpNormalization, LpPool, \\ ReduceL2, Sqrt; \textbf{\textit{Max/Min}}: Clip, GlobalMaxPool, \\ HardSigmoid, Max, MaxPool, MaxRoiPool, Min, \\ ReduceMax, ReduceMin, Relu; \textbf{\textit{Transformation}}: \\ BatchNormalization, InstanceNormalization\end{tabular} \\ \hline
\end{tabular}
\caption{\protect\centering Inconsistent operations in ONNX Opset Version 7}
\label{tab:operators}
\vspace{-25pt}
\end{table}

\subsubsection{Addressing non-associativity for consistent operations}
~

Disabling optimizations incompliant with standard makes many operations consistent. The other inconsistent operations are due to the non-associativity of floating-point operations. These operations are usually the most computationally intensive part in a DNN, so we must consider performance factors when we analyze the following ways to to tackle these operations.

The most straightforward method is {\em sequencing}, i.e., assigning a particular circuit representation to an operation by applying the left associativity for every basic operation. With the unambiguous circuit, $E_{OP}$ and $E_{BOP}$ have the same behavior and naturally output the same result. However, sequencing loses the benefit of parallel execution and hardware SIMD instructions.

A variant of sequencing, {\em customized sequencing}, stipulates that the order of basic operations is the order running on a specific hardware platform, for both $E_{OP}$ and $E_{BOP}$. In this way, on the platform we can leverage the benefit of parallelism. For example, we can specify the order of summation for a \textsf{ReduceSum} operation to comply with an eight-core AVX-enabled CPU, like Figure~\ref{fig:cseq}(c). However, to guarantee cross-platform consistency, the computation on other hardware platforms must comply with the customized circuit. We consider this a difficult task for many reasons, such as the size of the circuit, participant's consensus and software implementation.

Another possibility is using data types that have the associative property. For example, ReproBLAS \cite{reproblas} is a linear algebra library in which the floating-point numbers are represented by ``binned types'' and the result of summation is reproducible and independent of the order of summation. ReproBLAS has the potential to eliminate the inconsistency from non-associativity. However, it is currently incomplete to support DNN operations and multi-thread execution, and too complicated for on-chain arbitration.

Fixed-point arithmetic is a simple and efficient solution. With the associative property of fixed-point arithmetic, we can customize the optimal computation order on different platforms while achieving consistency. $E_{OP}$ can perform parallel computing, so the performance improvement is significant compared to sequencing. Another advantage is that many existing libraries and hardware platforms support high-performance fixed-point arithmetic.

\begin{table}[]
\setlength{\tabcolsep}{6pt}
\renewcommand{\arraystretch}{1.00}
\small
\centering
\begin{tabular}{@{}cccc@{}}
\toprule
\textbf{} & Precision & Performance & Usability \\ \midrule
Sequencing & $\star$$\star$ & $\star$ & $\star$$\star$ \\
Customized sequencing & $\star$$\star$ & $\star$$\star$ & $\star$ \\
ReproBLAS & $\star$$\star$$\star$ & $\star$ & $\star$ \\
Fixed-point arithmetic & $\star$ & $\star$$\star$$\star$ & $\star$$\star$$\star$ \\ \bottomrule
\end{tabular}
\caption{\protect\centering Different methods to tackle floating-point non-associativity}
\label{tab:fp}
\vspace{-25pt}
\end{table}

The comparison of the above methods is shown in Table~\ref{tab:fp}. We adopt both sequencing and fixed-point arithmetic for Agatha. Specifically, in the context of DNN computation, using fixed-point arithmetic (or integer arithmetic\footnote{Fixed-point arithmetic is essentially based on integer arithmetic.}) is called ``quantization'', which has only an insignificant loss on the DNN model's accuracy ~\cite{jacob2018quantization,sze2017efficient}. In practice, if someone wants the insignificant accuracy loss even smaller (e.g., from $1\%$ to $0.1\%$), one can use partial quantization. For example, most matrix multiplication-based operations (such as \textsf{Gemm}, \textsf{Conv}, \textsf{RNN}, \textsf{LSTM}) are quantized, but others still take floating-point numbers~\cite{onnxruntime}. Considering this accuracy-performance tradeoff, Agatha adopts fixed-point arithmetic for quantization operation and sequences others, faithfully following the DNN model description.

\subsection{Graphs ($E_{OP}-E_N$ consistency)}
Both $E_N$ and $E_{OP}$ evaluate the whole computational graph of a DNN. We have ensured that the operations are consistent, so it is easy to ensure the whole graph's consistency. The main pitfall here is the graph optimization in $E_N$. Some software's default graph optimization level (such as ONNX Runtime's) makes graph optimization so deep that the graph transformation changes the floating-point values. Setting a safer level resolves the issue.

\section{Security Analysis}\label{sec:security}

In this section, we provide the security analysis of our Agatha system with formal definitions and proofs.
\subsection{Definitions} \label{sec:def}
The security definitions in this paper are very similar to those of Arbitrum \cite{arbitrum}. The main difference is that we introduce the native DNN computation in the AnyTrust assumption and use GPP instead of VM-based pinpoint protocol. Besides, we also introduce the definitions for the evaluators used in Agatha system:
\begin{definition}[AnyTrust assumption.] \label{def:anytrust}
There are a quorum of off-chain nodes \set{N_1,\ldots,N_m} with their corresponding platforms \set{P_1,\ldots,P_m} for DNN computation, smart contract $\mathcal{C}$ as the arbitrator, a native DNN computation $\phi^n$, native evaluator $E_N$, input data $\mathcal{D}$, and a graph-based pinpoint protocol $\Pi$. We use $N_s\in \set{N_1,\ldots,N_m}$ to denote the submitter.
AnyTrust assumption holds if for every claim of $N_s$ that ``$\mathcal{R}_s$ = $E_N$(P$_s$, $\mathcal{D}$, $\phi^n$)'' sent to other nodes in \set{N_1,\ldots,N_m} with corresponding commitment to contract $\mathcal{C}$, either $N_s$ {\em honestly participates} in $\Pi$ or there exists a verifier $N_v\in \set{N_1,\ldots,N_m}$ ($v \neq s$) to correctly compute $\mathcal{R}_v$= $E_N$(P$_v$, $\mathcal{D}$, $\phi^n$) and to {\em honestly participate} in $\Pi$ if $\mathcal{R}_v\neq \mathcal{R}_s$.
\end{definition}

\begin{definition}[Evaluators in Agatha system.] \label{def:dnn}
There are four evaluators ($E_N$, $E_{OP}$, $E_{BOP}$ and $E_C$):
\squishlist
\item {\em For native execution,} $E_N$ is the evaluator for native DNN computation $\phi^n$ as the mapping: $P_{*}$ $\times \mathcal{D} \times$ $\phi^n$  $\mapsto$ $\mathcal{R}_{*}$, where $P_{*}$ denotes the platform to run $E_N$.
\item {\em For Phase-1 pinpoint protocol,} $\phi^n$ is conventionally expressed as a sequence of operations (OPs): \set{\phi^{op}_1,\phi^{op}_2,\ldots,\phi^{op}_k}. $\phi^{op}_i$ is executed by $E_{OP}$ as the mapping: $P_{*}$ $\times$ $\mathcal{D}$$^{op}_i$ $\times$ $\phi^{op}_i$ $\mapsto$ $\mathcal{R}^{op}_i$, where $\mathcal{D}$$^{op}_i$ and $\mathcal{R}^{op}_i$ denote the input and output of $\phi^{op}_i$, respectively. D$^{op}_i$ comes from either input $\mathcal{D}$ or previous output $\mathcal{R}^{op}_j$ ($j<i$). $E_{OP}$ can also calculate the intermediate result for OPs: $P_{*}$ $\times$ $\mathcal{D}$ $\times$ \set{\phi^{op}_1,\phi^{op}_{2},...,\phi^{op}_i} $\mapsto$ $\mathcal{R}^{op}_{i}$.
\item {\em For Phase-2 pinpoint protocol,} $\phi^{op}_i$ can be further serialized as a sequence of basic operations (BOPs): \set{\phi^{bop}_{i,1},\phi^{bop}_{i,2},\ldots,\phi^{bop}_{i,l_i}} (e.g., by topological sorting).  $E_{BOP}$ is the mapping: $P_{*}$ $\times \mathcal{D}^{bop}_{i,j}$ $\times$ $\phi^{bop}_{i,j}$ $\mapsto$ $\mathcal{R}^{bop}_{i,j}$, where $\mathcal{D}$$^{bop}_{i,j}$ and $\mathcal{R}^{op}_{i,j}$ denote the input and output of $\phi^{bop}_{i,j}$, respectively. Similar to $E_{OP}$, $E_{BOP}$ can also evaluate the intermediate result: $P_{*}$ $\times \mathcal{D}^{op}_i$ $\times$ \set{\phi^{bop}_{i,1},\phi^{bop}_{i,2},\ldots,\phi^{bop}_{i,j}} $\mapsto$ $\mathcal{R}^{bop}_{i,j}$.

\item {\em For contract arbitration,} $E_C$ evaluate contract arbitration $\phi^c$ as the mapping: $P^E_{*}$ $\times$ $\mathcal{D}$$^{bop}$ $\times$ $\phi^c$ $\mapsto$ $\mathcal{R}^{bop}$, where $P^E_{*}$ denotes the platform to run Ethereum client while $\mathcal{D}$$^{bop}$ and $\mathcal{R}^{bop}$ denote the input and output for $\phi^c$, respectively. Obviously, $E_C$ is cross-platform consistent, namely $E_C$($P^E_{*}, \mathcal{D}^{bop}$, $\phi^c$) = $E_C$($P'^E_{*}, \mathcal{D}^{bop}$, $\phi^c$) for arbitrary platforms $P^E_{*}$ and $P'^E_{*}$ with Ethereum clients.
\squishend
\end{definition}

\begin{definition}[Graph-based pinpoint protocol (GPP) $\Pi$.] \label{def:protocol}
$\Pi$ is encoded in contract $\mathcal{C}$ and triggered by the claim of submitter $N_s$. $\Pi$ provides two exclusive APIs \set{\pi^S, \pi^V} for the challenge of verifier $N_v$ $(v\neq s)$ and the response of submitter $N_s$, respectively. Besides the Merkle tree verification logic, four parameters are pre-determined: (1) a timeout period T$^v$ = O(|$\phi^n$|) which is proportional to the latency of $E_N$(P$_v, \mathcal{D}, \phi^n$)) and specifies the maximum interval for the first verifier to invoke $\pi^V$; (2) a timeout period T$^{op}$ = O(|$\phi^{op}$|) which is proportional to the latency of $E_{OP}$(P$_v$, $\mathcal{D}$,\set{\phi^{op}_1,\phi^{op}_2,\ldots,\phi^{op}_k}) for Phase-1 protocol and specifies the maximum interval between the verifier-submitter interactions (via $\pi^S$ and $\pi^V$) during the pinpoint protocol; (3) T$^{bop}$ = O(|$\phi^{bop}$|) for evaluating Phase-2 protocol which is similar to T$^{op}$; (4) the termination conditions (either for the aborting case or the non-aborting case) for the pinpoint protocol, which includes the contract arbitrator ${E_C}$ for $\phi^c$ (i.e., BOP).

 $N_s$ and $N_v$ are considered {\em honestly participating} in $\Pi$ if $N_s$ and $N_v$ always correctly interact with $\mathcal{C}$ within the time-out periods (i.e., T$^v$, T$^{op}$ and T$^{bop}$) according to contract states and their evaluation results of $\phi^n$, $\phi^{op}$, $\phi^{bop}$ with $E_N$, $E_{OP}$, $E_{BOP}$, respectively. Once $\pi^V$ is invoked, we assume that the count down of T$^v$ is suspended until termination. Verifier $N_v$ {\em fails in the pinpoint protocol} if either the time-out is reached or contract arbitration indicates $N_s$ is right. Failed verifier can no longer invoke $\pi^V$ for this claim. If all verifiers fail or there is no challenge within T$^v$, the contract {\em accepts} $N_s$'s claim otherwise {\em rejects} the claim.
\end{definition}

Following ACE \cite{ace}, the core security properties of Agatha system is the correctness and the liveness:

\begin{definition}[Correctness of Agatha.] \label{def:correctness}
Agatha's correctness is twofold: (1) The contract will not accept $N_s$'s wrong claim given the AnyTrust assumption and our design of GPP and XCE; (2) $N_s$'s correct claim will not be challenged by honest verifiers and will be accepted by the contract given malicious verifiers.
\end{definition}

\begin{definition}[Liveness of Agatha.] \label{def:liveness}
$N_s$'s claim will be either accepted or rejected by the contract within a maximum period $T_{max}$ even if $N_s$ and $m-1$ verifiers maliciously invoke $\pi^S$ and $\pi^V$, respectively.
\end{definition}

XCE plays the key role to guarantee the correctness of GPP and the whole Agatha system. In particular, XCE guarantees the cross-evaluator consistency between $E_N$ - $E_{OP}$, $E_{OP}$ - $E_{BOP}$ and $E_{BOP}$ - $E_{C}$, which can further ensure the cross-platform consistency:

\begin{definition}[Cross-evaluator consistency of XCE.] \label{def:consistency}
The consistency between three pairs of evaluators (i.e., $E_N$ - $E_{OP}$, $E_{OP}$ - $E_{BOP}$ and $E_{BOP}$ - $E_{C}$) are defined as follows:
\squishlist
 \item {$E_N$ - $E_{OP}$,} which means $\forall$ $\mathcal{D}$ and $\forall$ P$_* \in \set{P_1,\ldots,P_m}$, the output of $E_{OP}$ for the OP sequence is the same with $E_{N}$'s output $\mathcal{R}_{*}$:
 $E_{OP}$(P$_*$, $\mathcal{D}$,\set{\phi^{op}_1,\phi^{op}_2,\ldots,\phi^{op}_k}) = $E_N$(P$_*$, $\mathcal{D}$, $\phi^n$).
\item {$E_{OP}$ - $E_{BOP}$,} which means $\forall \mathcal{D}^{op}_i$ and $\forall$ P$_* \in \set{P_1,\ldots,P_m}$, the output of $E_{BOP}$ for the BOP sequence and $E_{OP}$'s output are same:
 $E_{BOP}$(P$_*, \mathcal{D}^{op}_i$,\set{\phi^{bop}_{i,1},\phi^{bop}_{i,2},\ldots,\phi^{bop}_{i,l_i}} ) = $E_{OP}$(P$_*, \mathcal{D}^{op}_i$, $\phi^{op}_i$).
 \item {$E_{BOP}$ - $E_{C}$,} which means $\forall \mathcal{D}^{bop}_{i,j}$, $\forall$ P$_* \in \set{P_1,\ldots,P_m}$ and $\forall$ $P^E_{*}$ running Ethereum, the output of $E_{BOP}$ for $\mathcal{D}$$^{bop}_{i,j}$ is the same with $E_{C}$'s output:
 $E_{BOP}$(P$_*, \mathcal{D}^{bop}_{i,j}$,$\phi^{bop}_{i,j}$) = $E_{C}$($P^E_{*}$, $\mathcal{D}$$^{bop}_{i,j}$, $\phi^{c}$).
\squishend
\end{definition}

\begin{definition}[Cross-platform consistency.] \label{def:platform}
$\forall$ $\mathcal{D}$ and $\forall$ P$_i$, P$_j$ $\in \set{P_1,\ldots,P_m}$, their outputs of $E_{N}$, $E_{OP}$, $E_{BOP}$ are equal: (1) $E_N$(P$_i, \mathcal{D}, \phi^n$) = $E_N$(P$_j, \mathcal{D}, \phi^n$); (2)  $E_{OP}$(P$_i$, $\mathcal{D}$,\set{\phi^{op}_1,\phi^{op}_2,\ldots,\phi^{op}_k}) = $E_{OP}$(P$_j$, $\mathcal{D}$,\newline$\{\phi^{op}_1, \phi^{op}_2,$ $\ldots,\phi^{op}_k\}$) ; (3) $E_{BOP}$(P$_i$, $\mathcal{D}$$^{op}_i$,\set{\phi^{bop}_{i,1},\phi^{bop}_{i,2},\ldots,\phi^{bop}_{i,l_i}} ) = $E_{BOP}$ (P$_j$, $\mathcal{D}$$^{op}_i$, \set{\phi^{bop}_{i,1},\phi^{bop}_{i,2},\ldots,\phi^{bop}_{i,l_i}} ).
\end{definition}

\subsection{Security proof} \label{sec:proof}
\begin{lemma}{XCE guarantees the cross-platform consistency.}\label{th:2}
\end{lemma}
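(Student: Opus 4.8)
The plan is to prove the three equalities of Definition~\ref{def:platform} in a bottom-up order, exploiting the fact that $E_C$ is already cross-platform consistent (stated in Definition~\ref{def:dnn}) and using it as the anchor that ties every platform together. The strategy is to chain the three pairwise cross-evaluator consistencies of Definition~\ref{def:consistency}: the $E_{BOP}$-$E_C$ consistency pushes platform-independence up from $E_C$ to $E_{BOP}$; the $E_{OP}$-$E_{BOP}$ consistency pushes it further to $E_{OP}$; and the $E_N$-$E_{OP}$ consistency finally pushes it to $E_N$. Each upward step, however, is stated only per computation unit (a single BOP, a single OP), so the real work is an induction that propagates the per-unit equality through the entire topologically-sorted sequence.

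First I would establish part (3), the cross-platform consistency of $E_{BOP}$ on the sequence $\{\phi^{bop}_{i,1},\ldots,\phi^{bop}_{i,l_i}\}$. Fixing two platforms $P_i,P_j$, I induct on the position in the topological order. The base case is that the operation input $\mathcal{D}^{op}_i$ is identical on both platforms. For the inductive step, assume the outputs of all earlier BOPs agree on $P_i$ and $P_j$; then the inputs $\mathcal{D}^{bop}_{i,j}$ of the current BOP also agree, since by the DAG structure they are drawn from $\mathcal{D}^{op}_i$ or from previous BOP outputs. Applying $E_{BOP}$-$E_C$ consistency on each platform rewrites both sides as $E_C$ evaluations on the same input $\mathcal{D}^{bop}_{i,j}$, and the platform-independence of $E_C$ forces them to coincide. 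Hence every BOP output agrees across platforms, and in particular so does the final result of the sequence.

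Next I would lift this to $E_{OP}$ (part (2)). For a single operation with fixed input $\mathcal{D}^{op}_i$, the $E_{OP}$-$E_{BOP}$ consistency rewrites $E_{OP}(P_i,\cdot)$ and $E_{OP}(P_j,\cdot)$ as $E_{BOP}$ evaluations on the same BOP sequence, which agree by part (3); so a single OP is cross-platform consistent. A second induction, now over the OP sequence $\{\phi^{op}_1,\ldots,\phi^{op}_k\}$ using the same DAG-input argument (each $\mathcal{D}^{op}_i$ comes from $\mathcal{D}$ or earlier outputs $\mathcal{R}^{op}_j$ with $j<i$), propagates this to the whole graph, yielding part (2). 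Part (1) is then immediate: $E_N$-$E_{OP}$ consistency equates $E_N(P_*,\mathcal{D},\phi^n)$ with $E_{OP}(P_*,\mathcal{D},\{\phi^{op}_1,\ldots,\phi^{op}_k\})$, and the latter is platform-independent by part (2).

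The main obstacle is bridging the gap between the per-unit form of cross-evaluator consistency and the per-sequence form of cross-platform consistency; the two inductions are precisely what close this gap. The subtle enabler I would emphasize is that Definition~\ref{def:consistency} quantifies over \emph{all} inputs $\mathcal{D}^{bop}_{i,j}$ and $\mathcal{D}^{op}_i$, not merely the ``genuine'' ones — this universal quantification is what lets the inductive hypothesis (agreement of upstream outputs) be fed into the consistency equation for the current unit. Without it one could not guarantee that the intermediate values arising during evaluation fall within the scope of the consistency guarantee, and the induction would fail to close.
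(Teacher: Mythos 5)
Your proof follows essentially the same route as the paper's: anchoring at the platform-independent $E_C$ and chaining the three cross-evaluator consistencies upward to obtain cross-platform consistency of $E_{BOP}$, then $E_{OP}$, then $E_N$. The paper's version is a two-sentence sketch that leaves the per-unit-to-per-sequence step implicit; your explicit inductions over the topologically sorted BOP and OP sequences (and your observation that the universal quantification over inputs in Definition~\ref{def:consistency} is what lets the induction close) merely spell out details the paper glosses over, so the argument is the same one, stated more carefully.
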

\begin{proof}
$\forall$ $\mathcal{D}$ and $\forall$ P$_i$, P$_j$ $\in \set{P_1,\ldots,P_m}$, with the cross-platform consistency of $E_C$ and the cross-evaluator consistency of $E_C$-$E_{BOP}$, we can get  $E_{BOP}$(P$_i$, $\mathcal{D}$$^{bop}$,$\phi^{bop}$) = $E_{BOP}$(P$_j$, $\mathcal{D}$$^{bop}$,$\phi^{bop}$), namely $E_{BOP}$ is cross-platform consistent. With the $E_{OP}$ - $E_{BOP}$ and $E_{OP}$ - $E_{N}$ consistency, we can further conclude $E_{OP}$(P$_i$, $\mathcal{D}$$^{op}_i$, $\phi^{op}_i$)=$E_{OP}$(P$_j$, $\mathcal{D}$$^{op}_i$, $\phi^{op}_i$) and $E_N$(P$_i$, $\mathcal{D}$, $\phi^n$) = $E_N$(P$_j$, $\mathcal{D}$, $\phi^n$).
\end{proof}

\begin{theorem}[The Correctness of Agatha]\label{th:1}
{ The claim of $N_s$ is accepted by contract $\mathcal{C}$ without challenges from honest verifiers if and only if $\mathcal{R}_s$ = $E_N$(P$_*$, $\mathcal{D}$, $\phi^n$).}
\end{theorem}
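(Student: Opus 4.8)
The plan is to establish the two directions of the equivalence separately, drawing on the AnyTrust assumption (Definition~\ref{def:anytrust}) and on Lemma~\ref{th:2} (cross-platform consistency) as the two principal tools.

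For the ``only if'' direction (necessity), I would argue by contradiction. Suppose the claim is accepted without any honest challenge, yet $\mathcal{R}_s \neq E_N(P_*, \mathcal{D}, \phi^n)$. Since the claim is then wrong, $N_s$ is not the honest party, so by AnyTrust there exists a verifier $N_v$ that correctly computes $\mathcal{R}_v = E_N(P_v, \mathcal{D}, \phi^n)$. Applying Lemma~\ref{th:2} gives $\mathcal{R}_v = E_N(P_v, \mathcal{D}, \phi^n) = E_N(P_*, \mathcal{D}, \phi^n) \neq \mathcal{R}_s$, hence $\mathcal{R}_v \neq \mathcal{R}_s$. By Definition~\ref{def:anytrust} this verifier then honestly participates in $\Pi$, i.e., it challenges---contradicting the assumption of no honest challenge. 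Therefore the claim must satisfy $\mathcal{R}_s = E_N(P_*, \mathcal{D}, \phi^n)$.

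For the ``if'' direction (sufficiency), assume $\mathcal{R}_s = E_N(P_*, \mathcal{D}, \phi^n)$, so $N_s$ is the honest party and participates honestly in $\Pi$. I would first show that no honest verifier challenges: by Lemma~\ref{th:2}, every honest $N_v$ obtains $\mathcal{R}_v = E_N(P_v, \mathcal{D}, \phi^n) = \mathcal{R}_s$, so the triggering condition $\mathcal{R}_v \neq \mathcal{R}_s$ of Definition~\ref{def:anytrust} never holds and no honest verifier enters $\Pi$. I would then show acceptance against up to $m-1$ malicious verifiers by tracing the cross-evaluator consistency of Definition~\ref{def:consistency}: the submitter's native result is faithfully reproduced by $E_{OP}$, then by $E_{BOP}$, and finally by the on-chain arbitrator $E_C$. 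Thus whenever a malicious verifier forces the protocol to a pinpointed BOP, the contract's arbitration via $E_C$ agrees with the submitter and marks the verifier ``incorrect''; as every malicious challenge fails, the termination condition of Definition~\ref{def:protocol} yields acceptance.

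The step I expect to be the main obstacle is the acceptance argument under malicious challenges---specifically, proving that arbitration at the pinpointed BOP always rules for the honest submitter. This needs the full cross-evaluator chain together with a property of the commitment scheme: the leftmost divergent VI$^C$ located by the Merkle-tree protocol has inputs on which both parties already agree, so the dispute collapses to a single well-defined basic operation whose correct output $E_C$ can recompute on-chain. Because XCE makes the submitter's value consistent at every level down to this BOP, the submitter necessarily matches $E_C$ and wins, while non-repudiation of the Merkle commitments prevents the malicious verifier from equivocating or retroactively rewriting its committed VI$^C$-sequence. Pinning down this reduction---from disagreement on the final result to a single arbitrable BOP decided by $E_C$---is the crux on which correctness rests.
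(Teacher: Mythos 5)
Your proposal is correct and uses the same ingredients as the paper's proof --- the AnyTrust assumption, Lemma~\ref{th:2}, the cross-evaluator consistency chain of Definition~\ref{def:consistency}, and the reduction of any dispute to a single arbitrable BOP --- and your sufficiency direction (honest verifiers stay silent because $\mathcal{R}_v=\mathcal{R}_s$; malicious challengers are defeated at the pinpointed BOP because XCE makes the submitter's value agree with $E_C$) matches the paper's almost verbatim, including your correct identification of the BOP-reduction as the crux. The one substantive difference is in the necessity direction: you stop at ``an honest verifier exists and will challenge,'' which indeed falsifies the literal left-hand side of the iff, whereas the paper carries the argument further to show that the wrong claim is actually \emph{rejected}. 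Concretely, the paper argues by contradiction that if $\mathcal{R}_s\neq E_N(P_*,\mathcal{D},\phi^n)$ then GPP must pinpoint some $\phi^{bop}$ with $\mathcal{R}^{bop}_s\neq E_{BOP}(P_*,\mathcal{D}^{bop},\phi^{bop})$ --- for if no such divergent BOP existed, the $E_{OP}$--$E_{BOP}$ and $E_{OP}$--$E_N$ consistencies would force $\mathcal{R}_s=E_N(P_*,\mathcal{D},\phi^n)$, a contradiction --- and this divergence is then arbitrated by $E_C$ against the submitter. That extra step is what delivers part (1) of Definition~\ref{def:correctness} (the contract will not accept a wrong claim), so if you intend the theorem to establish correctness in that stronger sense, you should mirror for the dishonest submitter the same pinpoint-to-BOP argument you already sketched for defeating malicious verifiers; as written, your necessity direction proves only that a challenge occurs, not that it succeeds.
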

\begin{proof}
 If $\mathcal{R}_s$ = $E_N$(P$_*$, $\mathcal{D}$, $\phi^n$), according to Lemma \ref{th:2} and Anytrust assumption, the honest verifiers will not challenge the claim since $\mathcal{R}_v = \mathcal{R}_s$.

 If $\mathcal{R}_s$ $\neq$ $E_N$(P$_*$, $\mathcal{D}$, $\phi^n$), we can prove that GPP will pinpoint to a $\phi^{bop}$ (unless $N_s$ aborts midway), where $\mathcal{R}^{bop}_s$ $\neq$ $E_{BOP}$(P$_*$, $\mathcal{D}$$^{bop}$, $\phi^{bop}$). This inequality will be arbitrated by E$_C$ and lead to the rejection of $N_s$'s claim. If we assume that there is no such $\phi^{bop}$, according to the $E_{OP}$ - $E_{BOP}$ and $E_{OP}$ - $E_{N}$ consistency, we can have $E_N$(P$_s$, $\mathcal{D}$, $\phi^n$) = $E_{BOP}$(P$_*$, $\mathcal{D}$$^{op}_i$, \set{\phi^{bop}_{i,j}}), where \set{\phi^{bop}_{i,j}} denotes the BOP sequence for \set{\phi^{op}_1,\phi^{op}_2,\ldots,\phi^{op}_k}. Because $\mathcal{R}^{bop}_s$ $=$ $E_{BOP}$(P$_*$, $\mathcal{D}$$^{bop}$,$\phi^{bop}$) for every $\phi^{bop}_{i,j}$, we have $E_N$(P$_s$, D, $\phi^n$)$\triangleq$ $\mathcal{R}_s$ = $E_N$(P$_*$, $\mathcal{D}$, $\phi^n$), contradicting to $\mathcal{R}_s$ $\neq$ $E_N$(P$_*$, $\mathcal{D}$, $\phi^n$).

 Similarly, we can also get that if $\mathcal{R}_s$ = $E_N$(P$_*$, $\mathcal{D}$, $\phi^n$), GPP will find $\mathcal{R}^{bop}_v$ $\neq$ $E_{BOP}$(P$_*$, $\mathcal{D}$$^{bop}$,$\phi^{bop}$) for the malicious verifiers' challenge. Therefore, all malicious verifiers will fail.
\end{proof}

\begin{theorem}[The Liveness of Agatha]\label{th:4}
Within maximum period {T$_{max}$ = T$^v$ + $m\times$(d($\vert\phi^{op}\vert$) $\times$T$^{op}$+ d($\vert\phi^{bop}\vert$) $\times$T$^{bop}$), where d($\vert\phi^{*}\vert$) is the depth of the k-ary Merkle tree representing $\phi^{*}$, the claim of $N_s$ will be either accepted or rejected.}
\end{theorem}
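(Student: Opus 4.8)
The plan is to bound the wall-clock time of an arbitrary (possibly adversarial) execution of the protocol $\Pi$ from Definition \ref{def:protocol} by decomposing it into two contributions: the cumulative time the contract spends waiting for a verifier to open a challenge, and the cumulative cost of resolving every challenge that is actually opened. I would not invoke the consistency results of Lemma \ref{th:2}; liveness is a purely temporal statement and follows from the timeout structure of $\Pi$ together with the monotone-progress guarantee that a failed verifier is permanently excluded. The key leverage is that every interaction in $\Pi$ is clocked by one of the pre-defined timeouts $T^v$, $T^{op}$, $T^{bop}$, so any party that stalls---including a malicious $N_s$ aborting mid-protocol, or a malicious $N_v$ refusing to respond---is forced to a definite outcome within a bounded interval, giving the acceptance/rejection guarantee required by Definition \ref{def:liveness}.

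First, I would bound the cost of a single challenge. By the commitment scheme of Section \ref{sec:bop} and the two-phase design of Section \ref{sec:twophase}, resolving one challenge consists of a Phase-1 pinpoint over the OP-level Merkle tree, then a Phase-2 pinpoint over the BOP-level Merkle tree, and a final arbitration by $E_C$. Locating the leftmost divergent leaf traverses the tree from the root to a leaf one level per interaction round, so Phase-1 terminates in at most $d(|\phi^{op}|)$ rounds and Phase-2 in at most $d(|\phi^{bop}|)$ rounds, where $d(\cdot)$ is the depth of the corresponding $k$-ary tree. Each Phase-1 round is capped by $T^{op}$ and each Phase-2 round by $T^{bop}$: if the challenged party or the challenger fails to answer correctly in time, the contract's timeout function fires and the non-responding party is immediately marked incorrect, terminating the challenge early. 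Hence a single challenge is resolved in at most $d(|\phi^{op}|)\cdot T^{op} + d(|\phi^{bop}|)\cdot T^{bop}$, with the closing $E_C$ arbitration being a single BOP of constant cost that I absorb into this bound.

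Next I would bound the number of challenges and account for $T^v$. By Definition \ref{def:protocol}, each challenge ends in exactly one of two ways: either contract arbitration marks $N_s$ incorrect and the claim is \emph{rejected} (the protocol halts), or the verifier \emph{fails} and can no longer invoke $\pi^V$ for this claim. Since a failed verifier is permanently excluded and there are only $m-1$ verifiers, at most $m-1 \le m$ distinct challenges can ever be opened before the claim is accepted or rejected. The timeout $T^v$ is the budget for a verifier to open a challenge; because its countdown is suspended while a pinpoint is in progress and resumes afterward (Definition \ref{def:protocol}), the total waiting-for-challenge time across the whole execution is bounded by $T^v$ rather than by $T^v$ \emph{per} challenge, and if $T^v$ elapses with no further challenge the contract \emph{accepts}. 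Summing the single waiting budget $T^v$ with the at most $m$ per-challenge pinpoint costs (the contract resolving disputes one at a time) yields $T_{max} = T^v + m\bigl(d(|\phi^{op}|)\cdot T^{op} + d(|\phi^{bop}|)\cdot T^{bop}\bigr)$, by which time the claim is accepted or rejected.

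I expect the main obstacle to be the adversarial accounting rather than any single estimate. Concretely, I must argue that a malicious $N_s$ cannot prolong the protocol by selectively aborting---this is where the per-round timeouts and the ``abort $\Rightarrow$ marked incorrect'' rule convert every stall into bounded progress toward rejection---and that $m-1$ malicious verifiers issuing spurious challenges against a correct claim cannot collectively exceed the bound, which rests on the permanent-exclusion rule forcing the challenge count to $\le m$. The other delicate point is the bookkeeping for $T^v$: I must justify that the suspend/resume semantics make $T^v$ appear only once in $T_{max}$, so that interleaving or re-initiating challenges does not reintroduce a factor of $m$ on the $T^v$ term.
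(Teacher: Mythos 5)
Your proof is correct and follows essentially the same route as the paper's: bound each individual challenge by $d(\vert\phi^{op}\vert)\cdot T^{op} + d(\vert\phi^{bop}\vert)\cdot T^{bop}$ via the per-round timeouts and the depth of the two $k$-ary Merkle trees, bound the number of challenges by the permanent-exclusion rule for failed verifiers, and add a single $T^v$ waiting budget. The only difference is presentational --- the paper case-splits on whether the claim is true or false (obtaining $(m-1)$ and $m$ challenge terms respectively), while you give a unified accounting and are more explicit about the suspend/resume semantics that make $T^v$ appear only once, a point the paper leaves implicit.
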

\begin{proof}
 If $N_s$'s claim is true and verifiers are all honest, the claim will be accepted within T$^v$. The worst case is that m-1 verifiers maliciously invoke $\pi^V$ at T$^v$. For every challenge from malicious verifiers, the process takes at most d($\vert\phi^{op}\vert$) $\times$T$^{op}$+ d($\vert\phi^{bop}\vert$) $\times$T$^{bop}$, which corresponds to T$_{max}$ = T$^v$ + $(m-1)\times$(d($\vert\phi^{op}\vert$) $\times$T$^{op}$+ d($\vert\phi^{bop}\vert$) $\times$T$^{bop}$).

 If $N_s$'s claim is false, normally it will be rejected with honest verifier's challenge in d($\vert\phi^{op}\vert$) $\times$T$^{op}$+ d($\vert\phi^{bop}\vert$) $\times$T$^{bop}$. There exists a worst case that the honest verifier and other malicious verifiers invoke $\pi^V$ at T$^v$, which corresponds to the rejection of $N_s$'s claim in T$_{max}$=T$^v$ + $m\times$(d($\vert\phi^{op}\vert$) $\times$T$^{op}$+ d($\vert\phi^{bop}\vert$) $\times$T$^{bop}$).
 \end{proof}

\section{Evaluations}\label{sec:eval}

In this section, we provide the details of our implementation and experiment setup. Then, we show the evaluations of Agatha's correctness and performance.

\subsection{Implementation and experiment setup}
\para{Implementation.} As shown in Figure \ref{fig:overview}, we implement the Agatha system with five steps: (1) With methods in Section \ref{sec:determine}, we implement XCE based on the ONNX Runtime \cite{onnxruntime}, targeting three models (i.e., MobileNet, ResNet50, VGG16) and use XCE-enabled native evaluator $E_{N}$.
(2) We integrate the rule-based generating and the topological serialization into our circuit generator with $\sim$1700 lines of C++ code. We speed up the generator by reusing duplicate circuits, given the recurrence of operations.
(3) We implement the logic of Merkle tree, the interface to the contract, Phase-1 and Phase-2 evaluators $E_{OP}$ and $E_{BOP}$, and a command-line interface in the Agatha client, which can connect to Ethereum and run the pinpoint protocol automatically for the submitter and the verifier. We implement the Agatha client with $\sim$4000 lines of Python code (testing logic included), which leverages the ONNX library \cite{onnx} and Brownie framework \cite{brownie}. The module for Merkle tree generating is highly optimized in C++ leveraging the XKCP library \cite{XKCP}, which provides the Keccak-256 hashing algorithm.
The leaves of the Merkle tree are fixed-length to enable packing before hashing. To balance the Merkle tree, we move the VI$^C$ and VI$^S$ of out/in to be under the Merkle root. We further compress the field of inputs in each VI with a Merkle tree structure. (4) We implement the Agatha contract in Solidity 0.7.1 \cite{solidity} with $\sim$500 lines of code, which specifies the logic for submitter-verifier interactions, Merkle path verification and the arbitration of BOP. The gas consumption is optimized in several ways,
such as leveraging gas refunds \cite{EIP2200} and using inline assembly in Solidity.
(5) We implement the BOP contract library with $\sim$1400 lines of Solidity code, in which the 32-bit floating-point arithmetic is adapted from the ABDK library \cite{abdk}.

\para{Setup.} For the off-chain nodes, we use Azure machines with 24 vCPUs (E5-2690 v3), which is the configuration for Section \ref{sec:perf}. We also use various machines of Table \ref{tab:hardware} to test cross-evaluator consistency defined in Section \ref{sec:security}.
For the nodes of Ethereum, we use Ganache \cite{ganache} to simulate the public Ethereum (Istanbul version \cite{istanbul}), which reports the same gas consumption.
And for benchmarks, we use the standard test images from ImageNet \cite{imagenet} and ONNX-format files \cite{onnxzoo} of three quantized models (i.e., MobileNet, ResNet50, VGG16) following \cite{slalom}. For the Merkle tree configuration, we set the branch size as 32 by default since it is optimal for the gas consumption (detailed in Figure \ref{fig:gas_tradeoff}).

\subsection{Validation about consistency} \label{sec:correct}
As shown in Section \ref{sec:security}, the correctness of Agatha system relies on the cross-evaluator consistency which are introduced in Section \ref{sec:determine} and defined in Section \ref{sec:security}. Besides our careful design and investigation, we also have conducted comprehensive tests to evaluate the cross-evaluator consistency. All cases are tested on the different hardware of Table \ref{tab:inconsistent}(b):
(1) {For $E_{BOP}$-$E_C$ consistency,} we use random inputs and $\sim$7500 floating-point corner cases as the unit tests for the $E_{OP}$-$E_{BOP}$ consistency. The subnormal numbers, infinities, signed zeros and NaNs are all included to show consistency;
(2) {For $E_{OP}$-$E_{BOP}$ consistency,} we test more than 20 selected corner cases and more than 1000 random cases. All tests demonstrate consistency;
(3)  {For $E_{N}$-$E_{OP}$ consistency,} the $E_{OP}$ and $E_{N}$ demonstrate consistent output hashes with all test images from ImageNet \cite{imagenet}.

\subsection{Performance evaluation}  \label{sec:perf}
We first profile the three models used in this work. As shown in Table \ref{tab:profile}, we provide the storage size, the number of OPs and the number of BOPs in each model, respectively. We see that the model can be as large as 133 MB with 30 billion BOPs. In addition, the recurrence of OPs is common, so our deduplication (i.e., circuit reuse) can reduce the latency of the circuit generation substantially. Due to the page limit, we provide details of OPs (e.g., OP size distribution, circuit generation) in Appendix \ref{sec:understand}.

\begin{table}[thbp]
\centering
\footnotesize
\setlength{\tabcolsep}{8pt}
\renewcommand{\arraystretch}{1.2}
\begin{tabular}{cccccc}
\hline
Model     & Storage   & \multicolumn{2}{c}{Before dedup} & \multicolumn{2}{c}{After dedup} \\ \hline
{}        &        & \#OP    & \#BOP                  & \#OP   & \#BOP                  \\ \cline{3-6}
MobileNet & 3.8 MB & 371     & $9.9 \times 10^8$      & 131    & $5.5 \times 10^8$      \\
ResNet50 & 25 MB  & 387     & $8.1 \times 10^9$      & 97     & $3.9 \times 10^9$      \\
VGG16     & 133 MB & 186     & $3.0 \times 10^{10}$   & 75     & $2.2 \times 10^{10}$   \\ \hline
\end{tabular}
\caption{The number of OPs and BOPs in each model} \label{tab:profile}
\vspace{-15pt}
\end{table}

\para{The performance of Agatha evaluators.}
 Figure \ref{fig:normallatency} shows the latency of the native evaluator $E_N$ for running each of the three models end-to-end, which is the average result of 10,000 tests. The latency of ours includes the inference time and the time to generate the output hash. Compared with the original evaluator (i.e., ONNX runtime), $E_N$ introduces latency overheads as 6.6\%, 1.4\% and 1.3\% for MobileNet, ResNet50 and VGG16 (the geometric mean is 3.0\%) , respectively. The overhead indicates the power of our design and the limited impact of turning off ``\texttt{-{}-ffast-math}'' option.

 We also test and estimate the latency for ideal VM and restricted VM, which denote the single-threaded native execution (the upper-bound performance of VM-based approaches) and the restricted VM of previous work, respectively. And for simplicity, we use Arbitrum VM as an example of the restricted VM which is based on EVM and actively being developed. Since EVM does not support floating-point arithmetic and has a gas limit, we use integer arithmetic for approximation and measure the speed of different instructions \cite{perez2019broken} to estimate EVM's lower-bound latency. We can see that $E_N$ demonstrates an average speedup of 2.59$\times$ and 602$\times$ compared to ideal VM and restricted VM, respectively.

\begin{figure}[thpb]
  \centering
    \vspace{5pt}
  \includegraphics[width=0.45\textwidth]{./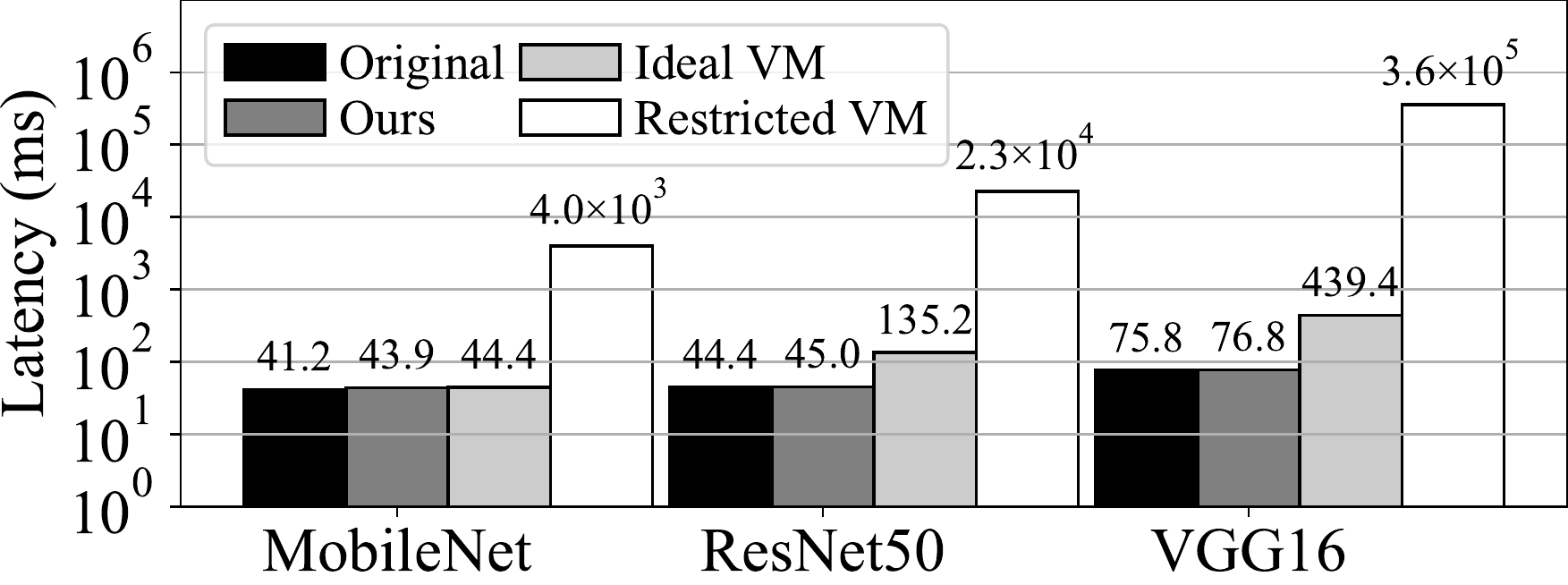}
  \vspace{-5pt}
  \caption{\protect\centering Inference latency of the original (i.e., baseline), our native evaluator, the ideal VM and the restricted VM}
  \vspace{-5pt}
   \label{fig:normallatency}
\end{figure}

Figure \ref{fig:p2latency} shows the local execution latency of Phase-1 and Phase-2 evaluators $E_{OP}$ and $E_{BOP}$, respectively. We can see that the latency of $E_{OP}$ is relatively small since we leverage the native execution of ONNX Runtime for evaluating every operation. To evaluate the $E_{BOP}$ latency, for simplicity, we only showcase the largest operation in each model (e.g., \textsf{ConvInteger} in VGG16) which represents the worst-case latency. The worst-case Phase-2 latency depends on the operation size and $E_{BOP}$ performance, which can be smaller or larger than Phase-1's. In addition, we also show the latency for one-phase pinpoint protocol where the whole DNN computation is expressed and evaluated in BOPs. We can find that the two-phase pinpoint greatly reduces the latency of the pinpoint protocol.
\begin{figure}[thb]
  \centering
    \vspace{-5pt}
  \includegraphics[width=0.42\textwidth]{./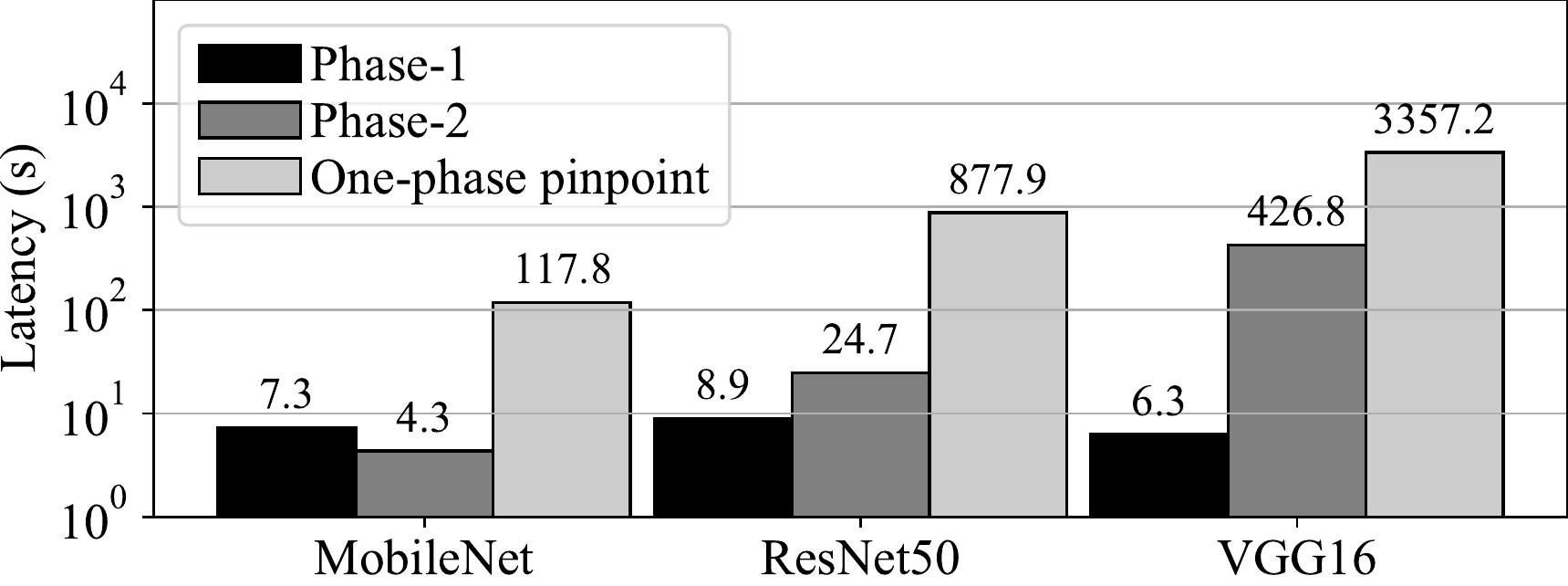}
  \vspace{-5pt}
  \caption{\protect\centering The Phase-1 and Phase-2 latency vs. latency of One-phase pinpoint}
  \vspace{-5pt}
   \label{fig:p2latency}
\end{figure}

\para{Gas consumptions.} In Figure \ref{fig:gas_tradeoff}(a), we show the gas consumptions of the main functions in our contract. Since we have applied the inline mechanism, the gas consumptions are calculated by differential analysis. For the setup, the Agatha contract and the BOP library are deployed to Ethereum with $\sim 4.8 \times 10^6$ gas, which is one-time. For the normal case, the Submit (i.e., submitter's on-chain commitment) costs $\sim$63,000 gas which corresponds to three ETH transfer transactions. For the pinpointing cost, we see that the gas consumed by BOPs is negligible compared to those of other functions. The maximal gas consumption for the whole pinpointing (i.e., Max total of VGG16) is about $\sim$86 ETH transfer transactions.

 \begin{figure}[thb]
   \centering
         \begin{subfigure}[t]{.23\textwidth}
       \centering
       \makebox(0,15)[l]{\hspace{0.3em}\textbf{(a)}}%
       \includegraphics[width=\textwidth]{./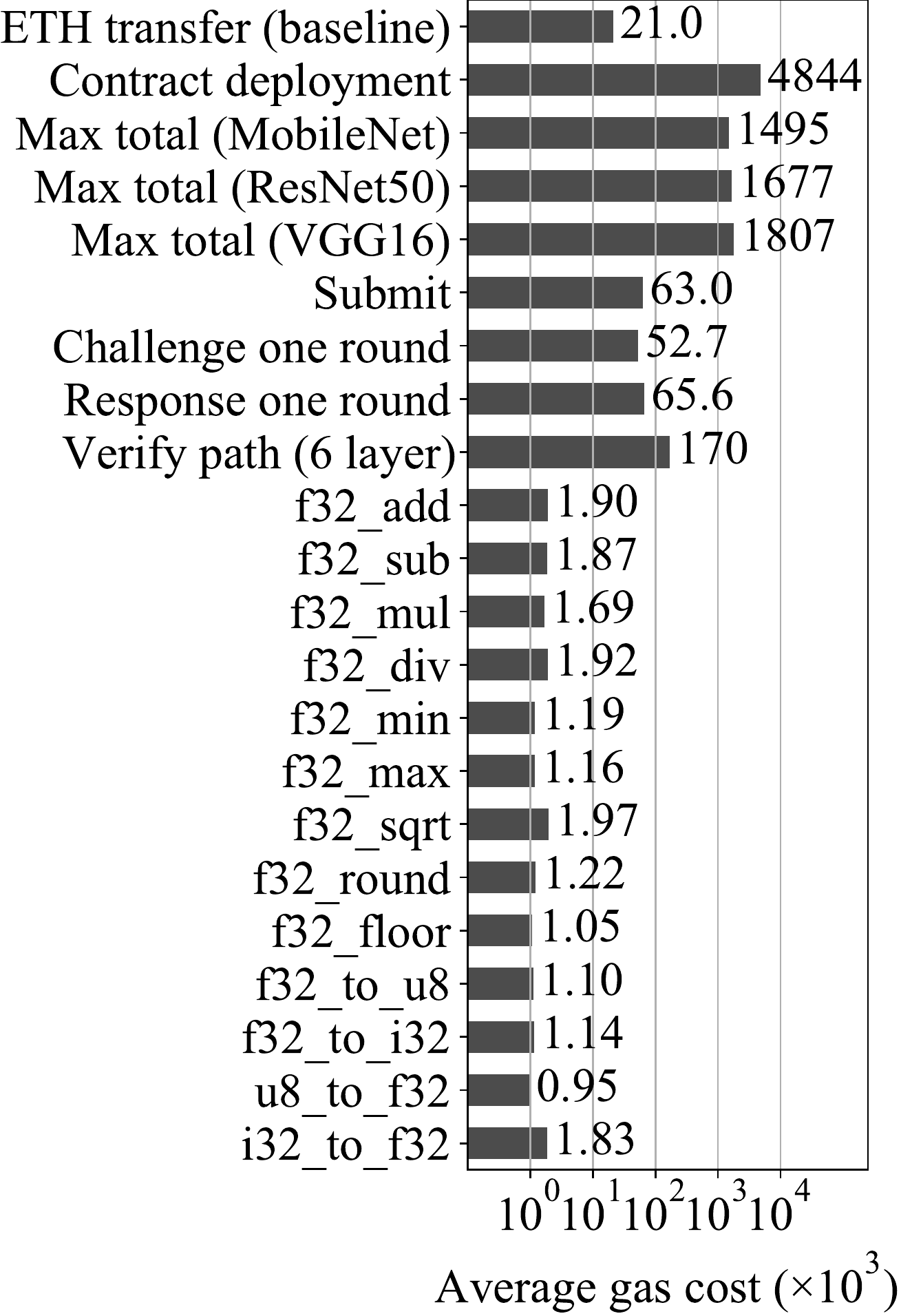}
     \end{subfigure}
     \begin{subfigure}[t]{.23\textwidth}
       \centering
       \makebox(0,15)[l]{\hspace{0.3em}\textbf{(b)}}%
       \includegraphics[width=\textwidth]{./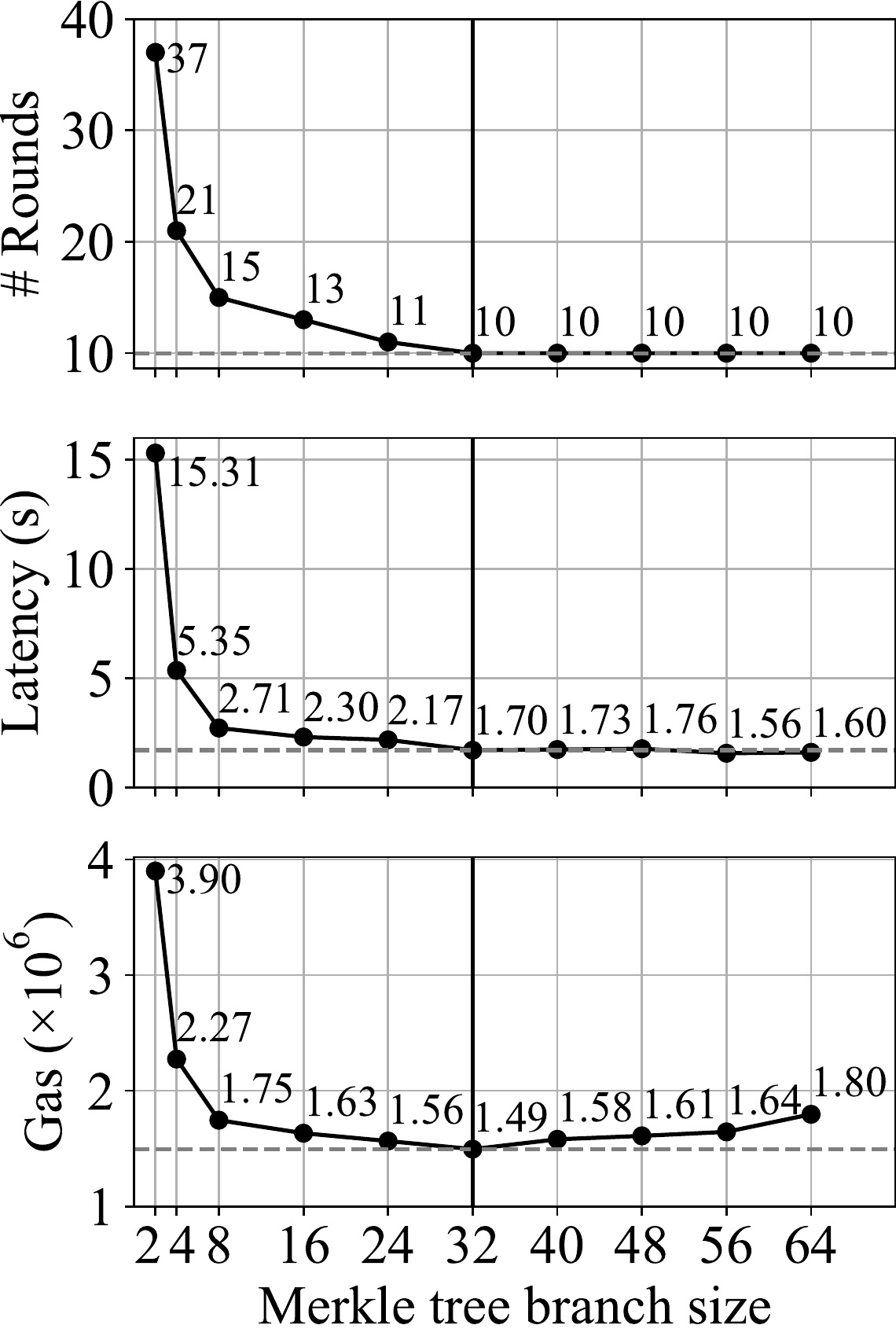}
     \end{subfigure}\hfill%
     \vspace{-5pt}
     \caption{\protect\centering (a) The breakdown of gas consumption (b) The optimal choice of Merkle tree branch size for MobileNet}\label{fig:gas_tradeoff}
     \vspace{-5pt}
 \end{figure}

\para{Merkle tree branches.} We also study the impact of the number of the Merkle tree branches on the performance, as shown in Figure \ref{fig:gas_tradeoff}(b). Due to the space constraint, we only show the case of the MobileNet, which is similar to those of ResNet50 and VGG16. When the branch size varies from 2 to 64, we see a decrease in the number of interaction rounds and the Merkle tree generating time, because the depth of Merkle tree decreases. Interestingly, the gas consumption for pinpointing is minimal when the branch size is 32, which is caused by two effects: (1) The number of rounds decreases with a larger branch size, which reduces the total gas of Challenge and Response in Figure \ref{fig:gas_tradeoff}(a). (2) The increase of branch size leads to more hashing workloads, which introduces the high overhead of Verify-Path in Figure \ref{fig:gas_tradeoff}(a).

\section{Related Work}\label{sec:relate}

\para{AI computation on blockchain.} The vision of AI computation on blockchain is a subject mentioned by researchers and companies. In academia, Das {\em et al.} mention machine learning as a potential area for ``Computationally Intensive Smart Contracts'' \cite{yoda}. And Teutsch {\em et al.} regard ``Autonomous machine learning'' as a promising application for TrueBit \cite{truebit} while Wüst {\em et al.} figure out that machine learning is infeasible for current smart contracts \cite{ace}. Other researchers discuss the potential of enabling machine learning on a blockchain for smart home \cite{smarthome} and cooperative data-driven applications \cite{cdda}. In industry, Microsoft proposes ``Decentralized \& Collaborative AI'' on blockchains \cite{microsoft}. Several startups, including Cortex \cite{cortex}, SingularityNET \cite{singularitynet}, Algorithmia \cite{algorithmia} and Oraclize \cite{oraclize}, share this vision. However, the existing proposals either demonstrate tiny AI computations that achieve the consensus on a handful of nodes \cite{cdda,smarthome,microsoft,algorithmia,singularitynet,cortex,oraclize} or provide no details \cite{yoda,truebit,ace}. They are not scalable either for both on-chain or off-chain execution. It is impractical for Ethereum smart contract to use these approaches to run a real-world DNN computation.

\para{Scalable smart contract.}
Due to the low scalability of blockchain \cite{statechannel,payment}, there are currently two mainstream approaches to offload the computation from smart contract to off-chain nodes: either via the {\em validity proof} or via the {\em fraud proof}. The first approach relies on verifiable computation (VC) or Zero-knowledge Proof (ZKP) with representative projects as zkSync \cite{zksync} and Aztec \cite{aztec}, which proves the integrity of computations with the proof about validity. And the contract only needs to verify the proof. However, current VC and ZKP use expensive cryptographic primitives \cite{zksnark, bulletproof, spartan, plonk}, which are currently impractical to support the complicated computations like DNN computations. For example, SafetyNets \cite{safetynets} demonstrates to interactively prove the AI computation of only a four-layer CNN. The state-of-the-art work, Slalom \cite{slalom}, has to additionally introduce a trusted execution environment~(TEE) for the DNN computation of MobileNet, ResNet50 and VGG16.

Therefore, the community also develops the second approach, i.e., the proof about fraud, for its potential to support large-scale applications. Representative technologies are Plasma \cite{plasma}, TrueBit\cite{truebit}, Arbitrum \cite{arbitrum}, YODA \cite{yoda}, ACE \cite{ace} and Optimism \cite{optimism}. These technologies introduce a quorum of off-chain nodes with weak assumptions (e.g., AnyTrust), and benefit from the power of pinpoint protocol and the potential reuse of contract checkers \cite{oyente,zeus,securify}. This approach is a promising candidate solution to be a part of Ethereum 2.0 (i.e., the Optimistic Rollup \cite{optimistic}) and will be applied in Uniswap v3 \cite{uniswapv3}. It also inspires other applications such as the efficient 2PC protocol \cite{2pc}.

\para{Referred delegation.}  Some early work investigates the situation where the arbitrator is not a smart contract \cite{quin, versum, focs}. In these work, two or more parties (with at least one honest party) are assumed to execute the same program. Then, the parties can challenge each other with a pinpoint style to locate the disputable step. The step is sent to a judge with weak computation power for arbitration. These technologies do not belong to scalable smart contract, but do share the pinpoint aspect, so are related to our work. %

\section{Conclusions}
We introduce the Agatha system to make decentralized smart contracts capable of DNN computation, which is a goal envisioned by the decentralized computing community. Agatha follows the state-of-the-art verification paradigm that is based on pinpointing and arbitration. The existing technologies rely on VM execution, not supporting native execution, which would make DNN computation too slow to be practical. To enable native execution, we develop Graph-based Pinpoint Protocol (GPP) and Cross-evaluator Consistent Execution (XCE). Agatha is evaluated using popular DNN models, e.g., MobileNet, ResNet50 and VGG16. Compared to the baseline, Agatha's average overhead of off-chain execution is only 3.0\%. By contrast, the slowdown would be at least 620$\times$ for the current VM-based technologies, assuming that these technologies could be built to handle industry-scale DNN models. %

\bibliographystyle{ACM-Reference-Format}
\bibliography{agatha}
\appendix
\section{Understanding the Operations}  \label{sec:understand}
For each model, we look into its operations. As shown in Figure \ref{fig:distribution}, we present the size (i.e., the number of BOPs) of every operation in a small-to-large order, with the size axis in the logarithmic scale. We can see that the operation sizes are not uniformly distributed. Some operations (e.g., \textsf{ConvInteger}) can be several magnitudes larger than the smaller ones (e.g., \textsf{Add}).
\begin{figure}[thb]
  \centering
  \includegraphics[width=0.48\textwidth]{./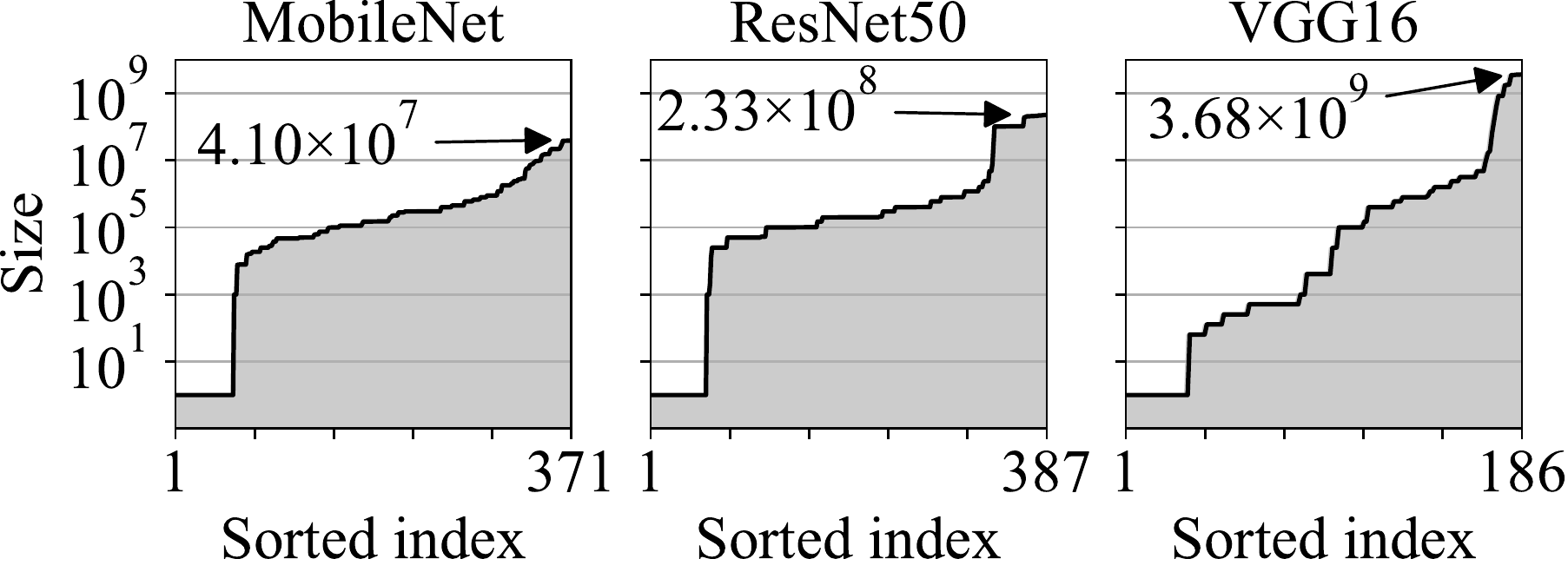}
  \caption{\protect \centering The distribution of operation sizes}\label{fig:distribution}
\end{figure}

In Figure \ref{fig:p0latency}, we illustrate the relation between the operation sizes and the latency of generating operation's MT$^S$. The points in the figure correspond to all the operations in the three models. We can see that, for a large operation, the latency is proportional to its size. For a small operation, the latency is about a small constant, mainly due to the setup time.
\begin{figure}[thb]
  \centering
  \includegraphics[width=0.42\textwidth]{./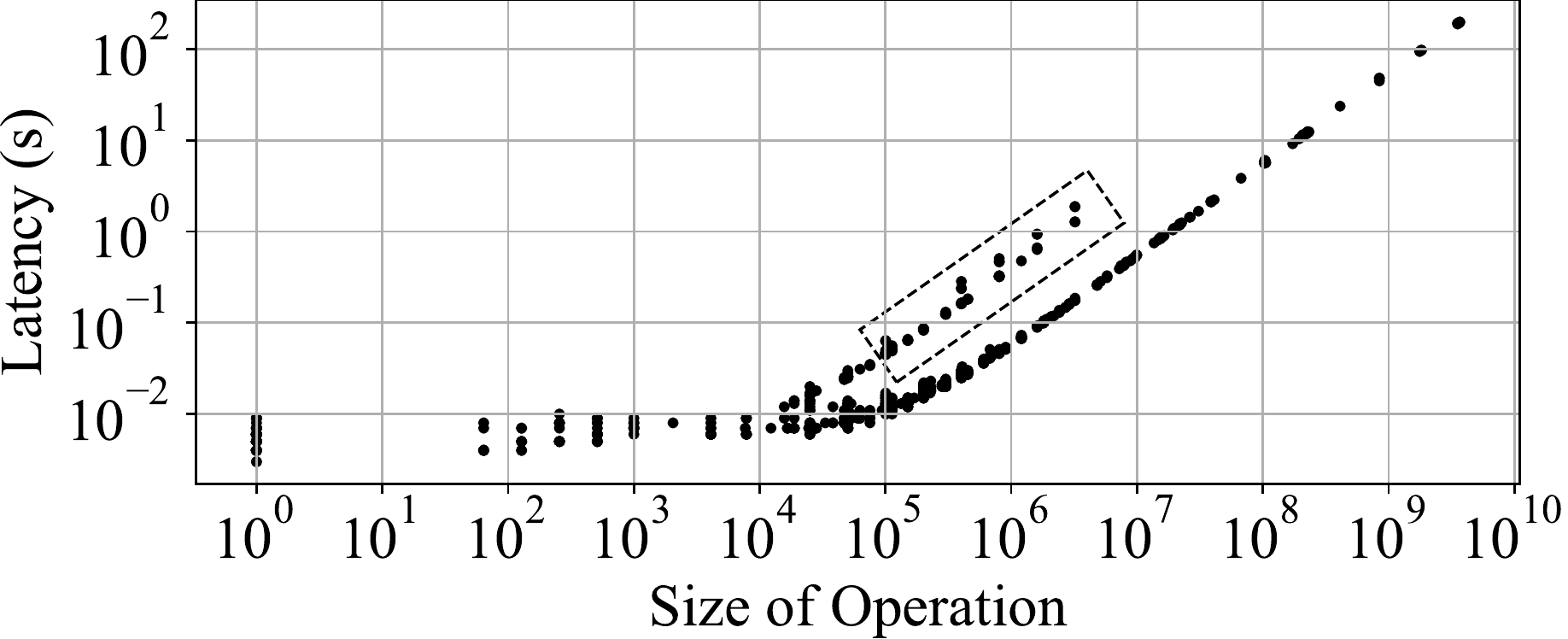}
  \caption{The latency to generate the circuit of operations}
  \label{fig:p0latency}
\end{figure}

Interestingly, we discover that the latency is not only related to the operation sizes, but also the operation types, attributes, and input/output shapes. For example, we highlight the branch which is caused by the longer generation latency of shape broadcasting. Generating all the P2\_MT$^S$ and P1\_MT$^S$ trees for MobileNet, ResNet50, VGG16 take 58s, 444s and 1642s, respectively. Note that this latency is in the protocol's setup period.

\end{document}